\documentclass[sigconf]{acmart}
\settopmatter{printacmref=false} 
\renewcommand\footnotetextcopyrightpermission[1]{} 
\pagestyle{plain} 

\citestyle{acmnumeric}
\usepackage{booktabs} 
\allowdisplaybreaks
\usepackage{flushend}
\usepackage{bbm}
\usepackage{bm} 
\usepackage{enumitem}
\usepackage{subcaption}
\usepackage{tikz-cd}

\newcommand{\thistheoremname}{}
\newtheorem*{genericthm*}{\thistheoremname}
\newenvironment{namedthm*}[1]
{\renewcommand{\thistheoremname}{#1}%
\begin{genericthm*}}
{\end{genericthm*}}

\newtheorem*{claim*}{Claim}

\newtheorem*{remark*}{Remark}
\newtheorem*{example*}{Example}

\setcopyright{rightsretained}





\renewcommand{\Pr}{\mathbb{P}} 
\newcommand{\expect}{\mathbb{E}} 

\newcommand{\cdf}{cdf}


\newcommand{\supn}{^{(n)}}
\newcommand{\tod}{\Rightarrow}

\begin{document}
\title{Delay Asymptotics and Bounds for Multi-Task Parallel Jobs}

\author{Weina Wang}
\affiliation{%
\institution{Coordinated Science Lab\\University of Illinois at Urbana-Champaign}
\city{Urbana} 
\state{IL}
\country{USA}
}
\email{weinaw@cs.cmu.edu}

\author{Mor Harchol-Balter}
\affiliation{%
\institution{Computer~Science~Department\\Carnegie Mellon University}
\city{Pittsburgh} 
\state{PA}
\country{USA}
}
\email{harchol@cs.cmu.edu}

\author{Haotian Jiang}
\affiliation{%
\institution{Department of Physics\\Tsinghua University}
\city{Beijing}
\country{China}
}
\email{jhtdavid@cs.washington.edu}

\author{Alan Scheller-Wolf}
\affiliation{%
\institution{Tepper School of Business\\Carnegie Mellon University}
\city{Pittsburgh} 
\state{PA}
\country{USA}
}
\email{awolf@andrew.cmu.edu}

\author{R.\ Srikant}
\affiliation{%
\institution{Coordinated Science Lab\\University of Illinois at Urbana-Champaign}
\city{Urbana} 
\state{IL}
\country{USA}
}
\email{rsrikant@illinois.edu}

\begin{abstract}
We study delay of jobs that consist of multiple parallel tasks, which is a critical performance metric in a wide range of applications such as data file retrieval in coded storage systems and parallel computing.  In this problem, each \emph{job} is completed only when \emph{all} of its tasks are completed, so the delay of a job is the maximum of the delays of its tasks. Despite the wide attention this problem has received, tight analysis is still largely unknown since analyzing job delay requires characterizing the complicated correlation among task delays, which is hard to do.

We first consider an asymptotic regime where the number of servers, $n$, goes to infinity, and the number of tasks in a job, $k\supn$, is allowed to increase with~$n$.  We establish the asymptotic independence of any $k\supn$ queues under the condition $k\supn = o(n^{1/4})$. This greatly generalizes the asymptotic-independence type of results in the literature where asymptotic independence is shown only for a fixed constant number of queues.  As a consequence of our independence result, the job delay converges to the maximum of independent task delays.

We next consider the non-asymptotic regime.  Here we prove that independence yields a stochastic upper bound on job delay for any $n$ and any $k\supn$ with $k\supn\le n$.  The key component of our proof is a new technique we develop, called ``Poisson oversampling''.  Our approach converts the job delay problem into a corresponding balls-and-bins problem.  However, in contrast with typical balls-and-bins problems where there is a negative correlation among bins, we prove that our variant exhibits positive correlation.
\end{abstract}

\maketitle

\section{Introduction}

\subsection*{\normalsize{The problem}}
We consider a system with $n$ servers, each with its own queue.  Jobs arrive over time according to a Poisson process, and each job consists of some number of tasks, $k$, where $k\le n$.  Upon arrival, each job chooses $k$ distinct servers uniformly at random and sends one task to each server. Each server serves the tasks in its queue in a First-In, First-Out (FIFO) manner.  A job is considered to be completed only when \emph{all} of its tasks are completed.  Our goal is to compute the \emph{distribution} of \emph{job delay}, namely the time from when a job arrives until the whole job completes. If a job's tasks experienced independent delays, then computing the distribution of job delay would be easy: take the maximum of the independent task delays. Unfortunately, the task delays are not independent in general.

\begin{sloppypar}
Our model is a generalization on the classic \emph{fork-join} model, which is identical to our model except that it assumes that $k=n$: every job is forked to all $n$ servers.  In contrast, in our model, the fork is \emph{limited} to $k$ servers with $k\le n$.  So we will refer to our model as the \emph{limited fork-join} model.  Obtaining tight analytical job delay characterizations for fork-join systems is known to be notoriously difficult: exact analysis of fork-join remains an open problem except for the two-server case \cite{FlaHah_84,Bac_85}.
\end{sloppypar}

\subsection*{\normalsize{Motivation}}
Delay of jobs, rather than delay of individual tasks, is a more critical performance metric in systems with parallelism, yet a fundamental understanding of job delay is still lacking.  One example application is data file retrieval in coded storage systems \cite{JosLiuSol_12,ShaLeeRam_13,LiRamSri_16,LeeShaHua_17,ShaBouBac_17}.  Here a job is the retrieval of a data file, which is stored as multiple data chunks.  The data chunks are in a coded form such that any $k$-sized subset of them is enough to reconstruct the file.  Coded file retrieval can be modeled via the so-called $(n,r,k)$ model \cite{ShaLeeRam_13} where a job can request $r$ data chunks with $r\ge k$ and the job is completed as long as $k$ of them are completed.  Existing analysis of the $(n,r,k)$ model is usually not tight except for the light load regime \cite{JosLiuSol_12,LiRamSri_16}. The special case where $r=d$ and $k=1$, called the Redundancy-d model, is also highly non-trivial and was solved just last year \cite{GarHarSch_17}. Job delay in general $(n,r,k)$ models remains wide open.  Within the coded file retrieval setting, our limited fork-join model can be viewed as the $(n,k,k)$ problem.

Another application is parallel computing systems such as the ``map'' phase of the popular MapReduce framework \cite{DeaGhe_04}, where a job is divided into tasks that can run in parallel. A few papers have been written to analytically approximate the delay of MapReduce jobs. Please see Section~\ref{sec:related-work} for more details of related work.

In the above applications, load-balancing policies (see, e.g., \cite{YinSriKan_15,XiaLanAgg_16,LiRamSri_16,LeeShaHua_17,ShaBouBac_17} are usually used for assigning tasks to servers. For scenarios where either low-overhead is desired or information accessibility is constrained (such as in a distributed setting), workload agnostic assignment policies \cite{XiaLanAgg_16,LeeShaHua_17,ShaBouBac_17} can be preferred.  Our limited fork-join model assumes a random task assignment policy, which is suitable for such application scenarios.

\subsection*{\normalsize{Our approach and what makes this problem hard}}
The root of the hardness of analyzing job delay in our model is the complicated correlation among queues, which leads to the correlation among the delays of a job's tasks.  If the task delays were independent, then the probability distribution of job delay would have a simple form.  In this paper, we are interested in developing conditions and quantifying in what sense the job delay can be approximated by the job delay under the independence assumption.

\textbf{Asymptotic Regime.}
We first study a regime where we prove that a job's tasks can be \emph{viewed} as being independent:  We focus on the asymptotic regime where the number of servers, $n$, goes to infinity.  Here we are specifically interested in developing conditions under which the delays of a job's tasks are \emph{asymptotically independent}, i.e., their joint distribution converges to the product distribution of their marginals.

Asymptotic independence of a number of queues in large systems is often called ``chaoticity'' and studied under the name ``propagation of chaos.''  In many papers \cite{VulMicGod_12,XieDonLu_15,GarHarSch_16,GarHarSch_17_2}, asymptotic independence is simply assumed to simplify analysis. In some load-balancing settings, asymptotic independence has been proven (e.g., \cite{BraLuPra_12,YinSriKan_15}).  One strong restriction of the existing proofs is that only a \emph{constant} number of queues are proven to be asymptotically independent.  In contrast, our goal is to establish asymptotic independence for any $k$ queues where $k$ may grow with $n$; we write $k$ as $k\supn$ to explicitly indicate its dependence on $n$.  The asymptotic independence of any $k\supn$ queues implies the asymptotic independence of the delays of a job's tasks since they are sent to $k\supn$ queues.  Allowing $k\supn$ to grow with $n$ captures the trends that data files get larger and that jobs are processing larger and larger data sets \cite{CheAlsKat_12}.

When proving asymptotic independence of a constant number of queues in steady state, it is typical to start by showing asymptotic independence over a \emph{constant time interval} $[0,t]$, where $t$ is long enough for these queues to be close to steady state.  Unfortunately, since $k\supn$ grows with $n$ in our model, to reach steady state, the system needs a time interval $[0,\tau\supn]$, growing with $n$.  This further complicates the analysis since asymptotic independence then needs to be established over this longer, \emph{non-constant}, time interval.

\textbf{Non-asymptotic regime.}
Next, we study the non-asymptotic regime.  We show that for \emph{any} $n$ and \emph{any} $k\supn=k$ with $k\supn\le n$, the distribution of job delay is \emph{stochastically upper bounded} by the distribution given by independent task delays, which we call the \emph{independence upper bound}.  Therefore, independence not only characterizes the \emph{limiting} behavior of job delay in the asymptotic regime where $n\to\infty$, but also yields an upper bound for \emph{any}~$n$. I.e., the independence upper bound is asymptotically \emph{tight}. An illustration of the tightness is provided in Figure~\ref{fig:ccdf} generated from simulations.  The independence upper bound is also tighter than all the existing upper bounds in prior work \cite{RizPolCiu_16, LeeShaHua_17}.

We prove the independence upper bound using the theory of associated random variables \cite{EsaProWal_67}. Association (also called positive association) is a form of positive correlation, and it has the property that if a set of random variables are associated, then the maximum of them is stochastically upper bounded by the maximum of independent versions of them.  To show the independence upper bound, it thus suffices to show that the delays of a job's tasks are associated.  Such an association result is known for the classical fork-join model with $k\supn = n$, but not for the limited fork-join model when $k\supn<n$. When proving association, a commonly used idea is to observe the system at each job arrival time, and show that the numbers of tasks sent to different queues are associated \cite{NelTan_88,KumSho_93,ShaBouBac_17}. This corresponds to a balls-and-bins problem where $k\supn$ balls are thrown into $n$ bins in the same way that the tasks are sent to the queues. What is needed is that the numbers of balls thrown in different bins are associated, which is obviously true for $k\supn=n$ since they are all equal to one, but not true when $k\supn<n$.  In fact, they are actually \emph{negatively associated} by a classical result \cite{JoaPro_83}.  However, this does not mean that the \emph{steady-state} queues are negatively associated, leaving the association problem for $k\supn<n$ unsolved in the literature. As pointed out in \cite{LeeShaHua_17}, it was not known if independence yielded a bound, either lower or upper.

We develop a novel technique that we call ``Poisson oversampling,'' where we \emph{observe} the system not only when jobs arrive but also at the jump times of a Poisson process that is independent of everything else.  This oversampling does not change the dynamics in the system since it is only a way to observe the system state.  But now at each observation time, there could be one or zero job arrivals.  So in the corresponding balls-and-bins problem, there is certain probability that there are no balls at all.  By properly choosing the observation rate, this extra randomness surprisingly makes the numbers of balls thrown in any $k\supn$ bins (positively) associated, and further implies that the steady-state queues are associated.  With this technique, we are able to prove the independence upper bound for any $k\supn\le n$ for the first time.

\subsection*{\normalsize{Results}}
Our goal is to characterize the tail probability of the job delay distribution in steady state, since it is commonly used to quantify the quality of service.  We study a system with $n$ servers in which each job consists of $k\supn$ tasks.
 
Our first result is that under the condition $k\supn=o(n^{1/4})$, the queues at any $k\supn$ servers are \emph{asymptotically independent} in steady state as $n\to\infty$, and thus the delays of a job's tasks are also asymptotically independent.  It then follows that the job delay converges to the job delay given by the independence assumption.  This result is established in Theorem~\ref{thm:asym-independence} for generally distributed service times, and some explicit forms are given in Corollary~\ref{cor:exp} for exponentially distributed service times.  One might wonder where the order of $o(n^{1/4})$ comes from or whether it can be increased; we discuss this in Section~\ref{sec:explain}.  This is the first asymptotically tight characterization of job delay in the limited fork-join model.

Our next result is that for \emph{any} $n$ and \emph{any} $k\supn$ with $k\supn\le n$, the job delay is stochastically upper bounded by the job delay given by the independence assumption.  We refer to this upper bound as the \emph{independence upper bound}.  It is a new upper bound on job delay that is tighter than existing upper bounds.  The technique we develop for the proof, named ``Poisson oversampling'', may be of independent interest for other related problems.

\subsection*{\normalsize{Organization of the paper}}
The rest of this paper is organized as follows. Section~\ref{sec:related-work} discusses the related work. We introduce our model and notation in Section~\ref{sec:model-notation}. We summarize our main results in Section~\ref{sec:main-results}. In Section~\ref{sec:proofs-asym-independence} we give proofs of the asymptotic independence results and the convergence of job delay. In Section~\ref{sec:independence-upper} we prove the independence upper bound. In Section~\ref{sec:simulations} we provide simulation evaluation of our analysis. We conclude our paper in Section~\ref{sec:conclusions}.

\section{Related Work}\label{sec:related-work}
In this section we discuss prior work on the limited fork-join model and some other related models.  Prior work on the limited fork-join model \cite{RizPolCiu_16,LeeShaHua_17} has focused on the non-asymptotic regime and derived bounds on job delay.  However, the bounds in \cite{RizPolCiu_16,LeeShaHua_17} do not have tightness guarantees.  In particular, the upper bounds there are generally looser than the independence upper bound.  Furthermore, none of the prior work has studied the asymptotic regime of the limited fork-join model. Below we give detailed discussions.

\textbf{Limited fork-join model.}
\citet{RizPolCiu_16} give upper bounds on the tail probabilities of job delay in various settings. For Poisson arrivals and exponentially distributed service times, their upper bound is looser than the independence upper bound.  For general service time distributions, their upper bound needs to be computed by numerically solving a non-linear equation.  In contrast, we show that the independence upper bound holds and we also further establish asymptotic tightness of the independence upper bound.

\citet{LeeShaHua_17} give upper and lower bounds on the \emph{mean} job delay, not on the tail probabilities, assuming that service times follow an \emph{exponential} distribution.  Their upper bound is in general looser than the expectation of the independence upper bound, although the difference disappears as $n\to\infty$ when $k\supn=o(n)$.  Compared to this, we prove that the independence upper bound is indeed an upper bound for \emph{any} $n$ and $k\supn$ with $k\supn\le n$.  Besides, we prove it for very general service time distributions and in a stochastic dominance sense, which is stronger than the expectation sense.  Also, there is a gap between their upper and lower bounds and there is no tightness analysis.  Again, we establish asymptotic tightness of the independence upper bound.

There has also been work on variants of the limited fork-join model where each job consists of a random number of tasks. For example, \citet{ShaBouBac_17} simply \emph{assume} that the number of tasks in each job has a distribution such that the numbers of tasks sent to different queues are associated, thus obtaining the independence upper bound for their model. They further investigate different policies for assigning the tasks of a newly arrived job to servers, and show that the job delay under the two studied policies is shorter (in a proper sense) than the job delay under the random assignment in the limited fork-join model. \citet{NelTowTan_88} consider a model where tasks wait in a central queue until some server becomes available. They show that the mean job delay is given by a set of recurrence equations, but no analytical form is derived.  \citet{KumSho_93} obtain upper and lower bounds on the mean delay when tasks are assigned to servers independently.  But still, there are gaps between the upper and lower bounds.

\textbf{Classic fork-join model.}
The classic fork-join model, where the number of tasks in a job is \emph{equal} to the number of servers, $n$, has been widely studied in the literature.  Similar to the limited fork-join model, tight characterizations of job delay are generally unknown except when $n=2$.  See \cite{Tho_14} for a detailed survey.  Here we just sample several most relevant papers. For general $n$, it has been proven that the mean delay of a job scales as $\Theta(\ln(n))$ as $n\to\infty$ under proper assumptions \cite{NelTan_88,BacMakShw_89}. Besides studying the limited fork-join model, \citet{RizPolCiu_16} also derive an upper bound on the tail distribution of the job delay for the classic fork-join model.  Again, the tightness of the bound is not addressed.

\textbf{MapReduce.}
Modeling MapReduce systems is challenging since the systems have many complex characteristics such as parallel servers, data locality, communication networks, etc. Most theoretical work on MapReduce does not provide analytical form bounds on the job delay. Papers such as \cite{MosDasKum_11,ZheShrSin_13} and \cite{SunKokShr_17} design scheduling algorithms such that the job delay is guaranteed to be within a constant factor of the optimal, but do not provide analytical bounds. \citet{TanMenZha_12} quantifies the distribution tail of job delay when the map phase is abstracted as a single-server queue, resulting in a system with much higher efficiency, especially when the number of tasks in a job is large.  

\textbf{Asymptotic task delay.}
One component of the job delay in MapReduce is the task delay. \citet{WanZhuYin_16} and \citet{XieLu_15} bound the mean \emph{task delay}, taking into consideration data locality; however they do not deal with the \emph{job delay}. Bounding \emph{job delay} would require characterizations of the correlation among queues.  \citet{YinSriKan_15} study the task delay in a model where a load-balancing policy called batch-filling is used.  They establish asymptotic independence for a \emph{constant} number of queues, which is insufficient for models with jobs with a growing number of tasks.

\section{Model and Notation}\label{sec:model-notation}

\begin{table}[tb]
\centering
\begin{tabular}{rp{0.35\textwidth}}
\toprule
$n$ & number of servers\\
\midrule
superscript $^{(n)}$ & quantities in the $n$-server system \\
\midrule
$k\supn$ & number of tasks in a job\\
\midrule
$\Lambda\supn$ & job arrival rate\\
\midrule
$\lambda$ & task arrival rate to each queue\\
\midrule
$1/\mu$ & mean of service time\\
\midrule
$\rho$ & load at each queue\\
\midrule
$W\supn_i(t)$ & workload of server $i$'s queue at time $t$\\
\midrule
$T\supn$ & job delay\\
\midrule
$\hat{T}\supn$ & job delay given by independent task delays\\
\midrule
$H_m$ & $m$-th harmonic number: $H_m=\sum_{j=1}^m\frac{1}{j}$\\
\bottomrule
\end{tabular}
  \caption{Notation Table}
  \label{tb:notation}
\end{table}

\textbf{Basic Notation.}
The symbols $\mathbb{R}_+$ and $\mathbb{Z}_+$ denote the set of nonnegative real numbers and nonnegative integers, respectively. We denote random variables by capital letters and vectors by bold letters. When a Markov chain $(\bm{X}(t),t\ge 0)$ has a unique stationary distribution, we denote by $\bm{X}(\infty)$ a random element whose distribution is the stationary distribution.

We denote by $\tod$ convergence in distribution (weak convergence) for random elements. We denote by $d_{TV}(\pi_1,\pi_2)$ the total variation distance between two probability measures $\pi_1$ and $\pi_2$ on a sigma-algebra $\sigma$ of some sample space, i.e.,
\begin{equation}
d_{TV}(\pi_1,\pi_2)=\sup_{\mathcal{S}\in\sigma}|\pi_1(\mathcal{S})-\pi_2(\mathcal{S})|.
\end{equation}

\textbf{Limited fork-join model.}
Our notation is summarized in Table~\ref{tb:notation}.  Recall that we consider a system with $n$ servers, each with its own FIFO queue.  We append the superscript $\supn$ to related quantities to indicate that they are for the $n$-server system. We say that a quantity is a \emph{constant} if it does not scale with $n$.

\emph{Jobs and tasks.}
Jobs arrive over time according to a Poisson process with rate $\Lambda\supn$, and each job consists of $k\supn$ tasks with $k\supn\le n$.  Upon arrival, each job picks $k\supn$ \emph{distinct} servers uniformly at random from the $n$ servers and sends one task to each server. We assume that $\Lambda\supn=n\lambda/k\supn$ for a \emph{constant} $\lambda$, where the constant $\lambda$ is the task arrival rate to each individual queue. Since different jobs choose servers independently, the task arrival process to each queue is also a Poisson process, and the rate is $\lambda$. The service times of tasks are i.i.d.\ following a \cdf\ $G$ with expectation $1/\mu$ and a finite second moment. We think of the service time of each task as being generated upon arrival: each task brings a required service time with it, but the length of the required service time is revealed to the system only when the task is completed.  The load of each queue, $\rho=\lambda/\mu$, is then a constant and we assume that $\rho<1$.

\emph{Queueing dynamics.}
It is not hard to see that each queue is an M/G/1 queue.  But the queues are not independent in general since $k\supn$ tasks arrive to the system at the same time.
Let $W\supn_i(t)$ denote the \emph{workload} of server $i$'s queue at time $t$, i.e., the total remaining service time of all the tasks in the queue, including the partially served task in service.  So the workload of a queue is the waiting time of an incoming task to the queue before the server starts serving it.  Let $\bm{W}\supn(t)=\bigl(W\supn_1(t),W\supn_2(t),\dots,W\supn_n(t)\bigr)$.  Then the workload process, $(\bm{W}\supn(t),t\ge 0)$, is Markovian and ergodic.  The ergodicity can be proven using the rather standard Foster-Lyapunov criteria \cite{MeyTwe_93}, so we omit it here. Therefore, the workload process has a unique stationary distribution and $\bm{W}\supn(t)\tod \bm{W}\supn(\infty)$ as $t\to\infty$.

\emph{Job delay.}
We are interested in the distribution of job delay in steady state, i.e., the delay a job would experience if it arrives to the system and finds the system in steady state.  Let a random variable $T\supn$ represent this steady-state job delay.  Specifically, the distribution of $T\supn$ is determined by the workload $\bm{W}\supn(\infty)$ in the following way.  When a job comes into the system, its tasks are sent to $k\supn$ queues and experience the delays in these queues.  Since the queueing processes are symmetric over the indices of queues, without loss of generality, we can assume that the tasks are sent to the first $k\supn$ queues for the purpose of computing the distribution of $T\supn$.  The delay of a task is the sum of its waiting time and service time.  So the task delay in queue $i$, denoted by $T\supn_i$, can be written as $T\supn_i=W\supn_i(\infty)+X_i$ with $X_i$ being the service time.  Recall that the $X_i$'s are i.i.d.$\sim G$ and independent of everything else.  Since the job is completed only when all its tasks are completed,
\begin{equation}\label{eq:T}
T\supn = \max\left\{T\supn_1,T\supn_2,\dots,T\supn_{k\supn}\right\}.
\end{equation}

We will study the relation between $T\supn$ and $\hat{T}\supn$ with $\hat{T}\supn$ defined as the job delay given by \emph{independent} task delays.  Specifically, $\hat{T}\supn$ can be expressed as:
\begin{equation}\label{eq:That}
\hat{T}\supn=\max\left\{\hat{T}\supn_1,\hat{T}\supn_2,\dots,\hat{T}\supn_{k\supn}\right\},
\end{equation}
where $\hat{T}\supn_1,\hat{T}\supn_2,\dots,\hat{T}\supn_{k\supn}$ are i.i.d.\ and each $\hat{T}\supn_i$ has the same distribution as $T\supn_i$. 
Again, due to symmetry, all the $T\supn_i$'s have the same distribution. Let $F$ denote the \cdf\ of $T\supn_i$, whose form is known from the queueing theory literature.  Then, we have the following explicit form for $\hat{T}\supn$:
\begin{equation}
\Pr\left(\hat{T}\supn\le \tau\right)=\left(F(\tau)\right)^{k\supn},\quad \tau\ge 0.
\end{equation}

\section{Main Results}\label{sec:main-results}
In Theorem~\ref{thm:asym-independence}, we establish asymptotic independence of any $k\supn$ queues under the condition $k\supn = o(n^{1/4})$ as the number of servers $n\to\infty$. The asymptotic independence is in the sense that the total variation distance between the distribution of the workloads of these queues and the distribution of $k\supn$ independent queues goes to $0$ as $n\to\infty$.  Consequently, the distance between the distribution of job delay, $T\supn$, and the distribution of the job delay given by independent task delays, $\hat{T}\supn$, goes to $0$. This result indicates that assuming independence among the delays of a job's tasks gives a good approximation of job delay when the system is large.  Again, due to symmetry, we can focus on the first $k\supn$ queues without loss of generality.

\begin{theorem}\label{thm:asym-independence}
\begin{sloppypar}
Consider an $n$-server system in the limited fork-join model with $k\supn=o(n^{1/4})$. Let $\pi^{(n,k\supn)}$ denote the joint distribution of the steady-state workloads $W\supn_1(\infty)$, $W\supn_2(\infty),\dots,W\supn_{k\supn}(\infty)$, and $\hat{\pi}^{(k\supn)}$ denote the product distribution of $k\supn$ i.i.d.\ random variables, each of which follows a distribution that is the same as the distribution of $W\supn_1(\infty)$. Then
\begin{equation}\label{eq:asym-independence}
\lim_{n\to\infty}d_{TV}\Bigl(\pi^{(n,k\supn)},\hat{\pi}^{(k\supn)}\Bigr)=0.
\end{equation}
Consequently, the steady-state job delay, $T\supn$, and the job delay given by independent task delays as defined in \eqref{eq:That}, $\hat{T}\supn$, satisfy
\begin{equation}\label{eq:converge-cdf-general}
\lim_{n\to\infty}\sup_{\tau\ge 0}\left|\Pr\bigl(T\supn\le \tau\bigr)-\Pr\bigl(\hat{T}\supn\le \tau\bigr)\right|=0.
\end{equation}
\end{sloppypar}
\end{theorem}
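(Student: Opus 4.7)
The plan is to prove \eqref{eq:asym-independence} by a coupling argument and then derive \eqref{eq:converge-cdf-general} as a quick consequence. By exchangeability over server indices, it suffices to compare the joint law $\pi^{(n,k\supn)}$ with the product of its marginals. I put the original $n$-server system and an auxiliary system of $k\supn$ independent M/G/1 queues (each with arrival rate $\lambda$ and service distribution $G$) on a common probability space, both run in their respective stationary regimes over a window $[-\tau\supn,0]$, and aim to show that on an event of probability $1-o(1)$ the first-$k\supn$ workload vectors at time $0$ coincide exactly.

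The coupling works as follows. For each original job whose chosen server set $A_j$ satisfies $A_j\cap[k\supn]=\{i\}$, I copy its arrival epoch and service time into auxiliary queue $i$. These \emph{exclusive} arrivals to different auxiliary queues form independent Poisson streams by the marking property of Poisson processes; a short combinatorial calculation shows their rate is $\lambda(1-O((k\supn)^2/n))$ per queue, so I top each queue up with an independent rate-$O(\lambda(k\supn)^2/n)$ \emph{extra} Poisson stream to make the total rate $\lambda$ while keeping auxiliary queues mutually independent. Let $B$ be the event that no ``bad'' job (one with $|A_j\cap[k\supn]|\ge 2$) arrives in $[-\tau\supn,0]$ and let $E$ be the analogous event for extras. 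The per-job bad probability is $O((k\supn)^4/n^2)$; combined with $\Lambda\supn=\Theta(n/k\supn)$ this yields $\Pr(B^c)+\Pr(E^c)=O(\tau\supn(k\supn)^3/n)$. On $B\cap E$ the arrival and service streams feeding the first $k\supn$ queues are literally identical across the two systems.

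The two systems start at $-\tau\supn$ with identical per-queue stationary marginals but differing joint structure. I invoke Lindley-recursion monotonicity: two workload processes driven by the same arrival/service stream merge at the first time after the larger one returns to zero. Writing $M_i=\max(W\supn_i(-\tau\supn),\hat W\supn_i(-\tau\supn))$ and $D_i$ for the associated drain time, the standard M/G/1 drain-time identity gives $\expect[D_i\mid M_i]=M_i/(1-\rho)$, and $\expect[M_i]\le 2\,\expect[W\supn_1(\infty)]<\infty$ since Pollaczek--Khinchine keeps $\expect[W\supn_1(\infty)]$ finite under the finite-second-moment assumption on $G$. Markov's inequality then yields $\Pr(D_i>\tau\supn)=O(1/\tau\supn)$, and a union bound over $k\supn$ queues gives a total regeneration failure of $O(k\supn/\tau\supn)$. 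Plugging into the coupling inequality,
\[
 d_{TV}\bigl(\pi^{(n,k\supn)},\hat\pi^{(k\supn)}\bigr)\;\le\; O\!\left(\frac{\tau\supn(k\supn)^3}{n}\right)+O\!\left(\frac{k\supn}{\tau\supn}\right),
\]
and the choice $\tau\supn=\sqrt{n}/k\supn$ balances both terms to $O((k\supn)^2/\sqrt n)$, which vanishes exactly when $k\supn=o(n^{1/4})$. The main obstacle is precisely this balance: the coupling error grows linearly in $\tau\supn$ while the regeneration tail decays only like $k\supn/\tau\supn$ under the mild moment assumption on $G$, and reconciling the two forces the $n^{1/4}$ threshold; getting a weaker threshold would either require a finer mixing estimate (e.g.\ under stronger moments) or a smarter coupling that avoids paying $\tau\supn\cdot(k\supn)^3/n$ in bad-job cost.

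For \eqref{eq:converge-cdf-general}, write $T\supn_i=W\supn_i(\infty)+X_i$ with $(X_i)$ i.i.d.\ $\sim G$ and independent of the workloads. Since convolving each coordinate with an independent copy of a common distribution is non-expansive in total variation (data processing), $d_{TV}\bigl((T\supn_i)_{i=1}^{k\supn},(\hat T\supn_i)_{i=1}^{k\supn}\bigr)\le d_{TV}\bigl(\pi^{(n,k\supn)},\hat\pi^{(k\supn)}\bigr)$. Because $\{T\supn\le\tau\}=\bigcap_i\{T\supn_i\le\tau\}$ is a single measurable event in the joint $\sigma$-algebra, $\sup_{\tau\ge 0}|\Pr(T\supn\le\tau)-\Pr(\hat T\supn\le\tau)|$ is bounded above by that TV distance, which tends to zero by the first part of the theorem.
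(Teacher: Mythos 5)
Your proof is correct and establishes the same $O((k\supn)^2/n^{1/2})$ rate and $o(n^{1/4})$ threshold as the paper's, but it organizes the argument along a genuinely different route. The paper runs both the original system $\mathcal{S}\supn$ and an auxiliary system $\widetilde{\mathcal{S}}\supn$ \emph{forward from empty queues at time $0$} and decomposes $d_{TV}\bigl(\pi^{(n,k\supn)},\hat\pi^{(k\supn)}\bigr)$ into four pieces via Lemmas~\ref{lem:independence-finite-time}--\ref{LEM:PI-TILDE-VS-PI-HAT} and the triangle inequality; crucially, the auxiliary system's first $k\supn$ queues are made independent by \emph{killing} all but one of the tasks of any job that would hit two or more of them, which drives their arrival rate down to $\widetilde{\lambda}\supn<\lambda$ and necessitates the separate Lemma~\ref{LEM:PI-TILDE-VS-PI-HAT} comparing the rate-$\widetilde{\lambda}\supn$ product stationary law to the rate-$\lambda$ one. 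You instead run both systems \emph{in stationarity} over $[-\tau\supn,0]$, and rather than killing tasks you \emph{top up} the exclusive-arrival streams with independent Poisson extras so that the auxiliary queues are exactly independent rate-$\lambda$ M/G/1 queues; this eliminates the rate-comparison step entirely and collapses the four-way decomposition into a single two-term bound (coupling failure $B^c\cup E^c$ plus drain failure). Both routes then balance the identical pair of forces --- bad-job coupling cost $\tau\supn(k\supn)^3/n$ against regeneration tail $k\supn/\tau\supn$; your $M_i$-drain argument and the busy-period argument in Lemma~\ref{lem:convergence-rate} are the same Markov-plus-union bound in different clothes --- and land on the same $\tau\supn=\Theta(n^{1/2}/k\supn)$ balance point. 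For the second part, your data-processing/single-event bound is marginally tighter (saving a factor of $2$) than the paper's explicit integral estimate, though both give the limit. One detail worth making explicit, although it is a routine construction rather than a gap: to have both coupled systems simultaneously stationary on $[-\tau\supn,0]$ you should extend the arrival streams to $(-\infty,0]$ and take the Loynes construction of the stationary workloads, which makes the $M_i$ measurable with respect to the past and independent of the arrival increments over $[-\tau\supn,0]$; the paper's choice to start from empty sidesteps this, at the cost of carrying one extra coupling with a stationary copy inside Lemma~\ref{lem:convergence-rate}.
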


For the special case where the service times are exponentially distributed, the job delay asymptotics have explicit forms presented in Corollary~\ref{cor:exp} below.
\begin{corollary}\label{cor:exp}
Consider an $n$-server system in the limited fork-join model with $k\supn=o(n^{1/4})$, job arrival rate $\Lambda\supn=n\lambda/k\supn$, and exponentially distributed service times with mean $1/\mu$.  Then the steady-state job delay, $T\supn$, converges as:
\begin{equation}\label{eq:converge-cdf}
\lim_{n\to\infty}\sup_{\tau\ge 0}\left|\Pr\bigl(T\supn\le \tau\bigr)-\left(1-e^{-(\mu-\lambda)\tau}\right)^{k\supn}\right|=0,
\end{equation}
Specifically, if $k\supn\to\infty$ as $n\to\infty$, then
\begin{equation}\label{eq:converge-in-distr}
\frac{T\supn}{H_{k\supn}/(\mu-\lambda)}\tod 1,\quad\text{as }n\to\infty,
\end{equation}
where $H_{k\supn}$ is the $k\supn$-th harmonic number, and further,
\begin{equation}\label{eq:converge-expectation}
\lim_{n\to\infty}\frac{\expect\bigl[T\supn\bigr]}{H_{k\supn}/(\mu-\lambda)}=1.
\end{equation}
\end{corollary}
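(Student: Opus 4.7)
The plan is to combine Theorem~\ref{thm:asym-independence} with classical facts about the maximum of i.i.d.\ exponential random variables. Under exponential service times with rate $\mu$, each individual queue is an M/M/1 queue, and its steady-state sojourn time (waiting plus service) is exponentially distributed with rate $\mu-\lambda$. Hence the marginal cdf is $F(\tau)=1-e^{-(\mu-\lambda)\tau}$ and $\Pr(\hat{T}\supn\le\tau)=(1-e^{-(\mu-\lambda)\tau})^{k\supn}$. Substituting this formula into \eqref{eq:converge-cdf-general} yields \eqref{eq:converge-cdf} immediately.

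For \eqref{eq:converge-in-distr}, set $c_n=H_{k\supn}/(\mu-\lambda)$. I would first verify the classical fact that $\hat{T}\supn/c_n\toprob 1$ as $k\supn\to\infty$: using $H_{k\supn}=\ln k\supn+O(1)$, a direct computation gives $e^{-(1\pm\epsilon)H_{k\supn}}=\Theta(k\supn^{-(1\pm\epsilon)})$, so $(1-e^{-(1+\epsilon)H_{k\supn}})^{k\supn}\to 1$ and $(1-e^{-(1-\epsilon)H_{k\supn}})^{k\supn}\to 0$ for every fixed $\epsilon>0$. The uniform cdf-closeness in \eqref{eq:converge-cdf} then transfers these two limits to $T\supn$ via $\Pr(T\supn>(1+\epsilon)c_n)\le\Pr(\hat{T}\supn>(1+\epsilon)c_n)+o(1)$ and the symmetric inequality for the lower tail. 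Hence $T\supn/c_n\toprob 1$, which for a constant limit is equivalent to convergence in distribution.

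For \eqref{eq:converge-expectation} I would bound $\E{T\supn}/c_n$ from above and below separately. For the $\limsup$ I would invoke the paper's other main result, the independence upper bound, which states that $\hat{T}\supn$ stochastically dominates $T\supn$ for every $n$; this yields $\E{T\supn}\le\E{\hat{T}\supn}=c_n$ and hence $\limsup_n\E{T\supn}/c_n\le 1$. For the $\liminf$ I would use the elementary tail inequality $\E{T\supn}\ge(1-\epsilon)c_n\cdot\Pr(T\supn>(1-\epsilon)c_n)$, and since the probability factor tends to $1$ by the convergence in probability just established, $\liminf_n\E{T\supn}/c_n\ge 1-\epsilon$. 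Letting $\epsilon\downarrow 0$ closes the gap.

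The main obstacle I anticipate is the $\limsup$ in \eqref{eq:converge-expectation}: the distributional convergence of $T\supn/c_n$ on its own is not enough because the family $\{T\supn/c_n\}$ need not be uniformly integrable without an a-priori tail bound. The cleanest remedy is to borrow the stochastic dominance supplied by the paper's independence upper bound. The remaining pieces---\eqref{eq:converge-cdf}, the convergence $\hat{T}\supn/c_n\toprob 1$, and the $\liminf$ in \eqref{eq:converge-expectation}---follow essentially mechanically from Theorem~\ref{thm:asym-independence} together with the standard extremal behavior of i.i.d.\ exponentials.
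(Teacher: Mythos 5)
Your proof is correct and follows essentially the same route as the paper: the explicit M/M/1 form of $F$ combined with Theorem~\ref{thm:asym-independence} gives \eqref{eq:converge-cdf}, the classical extreme-value behavior of i.i.d.\ exponentials transferred via the uniform cdf bound gives \eqref{eq:converge-in-distr}, and the stochastic dominance of Theorem~\ref{thm:independence-upper} supplies the tail control needed for \eqref{eq:converge-expectation}. The only difference is cosmetic and lies in the last step: the paper passes the limit inside the integral $\int_0^\infty\Pr\bigl(T\supn>\tau H_{k\supn}/(\mu-\lambda)\bigr)\,d\tau$ via the General Lebesgue Dominated Convergence Theorem, whereas you split into a $\limsup$ bound from $\expect[T\supn]\le\expect[\hat{T}\supn]=H_{k\supn}/(\mu-\lambda)$ (stochastic dominance) and a $\liminf$ bound from an elementary tail inequality together with \eqref{eq:converge-in-distr} --- both arguments use exactly the same two inputs.
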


The results above characterize job delay in the asymptotic regime where $n$ goes to infinity.  In Theorem~\ref{thm:independence-upper} below, we study the non-asymptotic regime for any $n$ and any $k\supn$ with $k\supn=k\le n$, and we establish the independence upper bound on job delay.

\begin{theorem}\label{thm:independence-upper}
Consider an $n$-server system in the limited fork-join model with $k\supn=k\le n$. Then the steady-state job delay, $T\supn$, is stochastically upper bounded by the job delay given by independent task delays as defined in \eqref{eq:That}, $\hat{T}\supn$, i.e.,
\begin{equation}\label{eq:ind-upper}
T\supn\le_{st} \hat{T}\supn,
\end{equation}
where ``$\le_{st}$'' denotes stochastic dominance. Specifically, for any $\tau\ge 0$,
\begin{align}
\Pr\bigl(T\supn > \tau\bigr)&\le\Pr\bigl(\hat{T}\supn > \tau\bigr) = 1-\left(F(\tau)\right)^{k\supn}.\label{eq:ind-upper-tail}
\end{align}
\end{theorem}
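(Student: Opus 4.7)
The plan is to invoke the theory of associated random variables \cite{EsaProWal_67}: if $Y_1, \dots, Y_m$ are associated, then $\Pr(\max_i Y_i \le \tau) = \Pr(Y_1 \le \tau, \dots, Y_m \le \tau) \ge \prod_i \Pr(Y_i \le \tau)$, which is precisely the inequality $\max_i Y_i \le_{st} \max_i \hat{Y}_i$ for independent copies $\hat{Y}_i$. By symmetry I may assume the tagged job's tasks are routed to queues $1,\dots,k\supn$, so $T\supn_i = W\supn_i(\infty) + X_i$ with the $X_i$'s i.i.d.\ and independent of the workloads. Since association is preserved under coordinate-wise nondecreasing functions and since independent associated vectors concatenate to an associated vector, the task delays inherit association from the workloads. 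Thus it suffices to prove that $\bigl(W\supn_1(\infty), \dots, W\supn_{k\supn}(\infty)\bigr)$ is associated.

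The obstruction to a direct induction is that at each job arrival, exactly $k\supn$ of the $n$ queues receive a task, and the selection indicator vector $\bm{Y} = (Y_1, \dots, Y_n)$ (with $Y_i$ the indicator that queue $i$ is chosen) follows a sampling-without-replacement distribution, which is \emph{negatively} associated \cite{JoaPro_83}. So the naive ``propagate association through one arrival'' step fails. The remedy, Poisson oversampling, is to superimpose an independent auxiliary Poisson process of rate $\alpha$ and observe the system at the pooled epochs, arriving at total rate $\Lambda\supn + \alpha$. At each observation an independent Bernoulli $Z$ with $\Pr(Z=1) = \Lambda\supn/(\Lambda\supn + \alpha)$ decides whether it is a real arrival or a null event. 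The oversampling only relabels observation times, so the continuous-time workload dynamics and the stationary law are unchanged.

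The effective per-epoch ``ball'' vector is $\bm{B} = (B_1, \dots, B_n)$ with $B_i = Z \cdot Y_i$. Mixing by $Z$ creates a common-mode event $\{Z=0\}$ on which every $B_i$ simultaneously vanishes, and this is the mechanism that can flip the sign of the correlation among coordinates. A direct computation of $\cov(B_i, B_j)$ shows it becomes nonnegative once $\alpha$ exceeds an explicit threshold depending on $n$ and $k\supn$; the stronger claim is that in this regime $\bm{B}$ is in fact associated. I would verify the defining inequality $\expect[f(\bm{B}) g(\bm{B})] \ge \expect[f(\bm{B})] \expect[g(\bm{B})]$ for nondecreasing bounded $f, g$ directly, by conditioning on $Z$ and exploiting the fact that $\bm{B}\equiv \bm{0}$ on $\{Z=0\}$ while on $\{Z=1\}$ only finitely many configurations arise, reducing the inequality to a finite combinatorial check tuned by the choice of $\alpha$.

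Once per-epoch increments are associated, the embedded discrete-time workload chain with $\bm{W}\supn(0) = \bm{0}$ remains associated at every epoch: between epochs each coordinate decreases deterministically in a coordinate-wise nondecreasing manner, and at an epoch the post-update workload is a coordinate-wise nondecreasing function of the previous workload, the increment indicator $\bm{B}$, and a fresh vector of i.i.d.\ service times, all independent and associated; two closure properties of association then propagate associativity inductively. Association is preserved under weak limits, so $\bm{W}\supn(\infty)$ is associated, and projecting onto coordinates $1,\dots,k\supn$ closes the reduction. The main technical obstacle will be the middle step---upgrading the elementary pairwise covariance calculation on $\bm{B}$ to full association uniformly in $n$ and $k\supn\le n$---which is exactly the point at which the freedom to choose the oversampling rate $\alpha$ is essential.
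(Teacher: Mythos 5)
Your framework matches the paper's own proof step for step: reduce to association of the steady-state workloads $W\supn_1(\infty),\dots,W\supn_{k\supn}(\infty)$ via the closure properties of association under nondecreasing maps and independent concatenation; embed the workload process at a discrete Poisson-oversampled clock; and propagate association by induction through the Lindley recursion, with the base case $\bm{0}$. This is exactly the paper's ``Poisson oversampling'' argument.

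The one place where your sketch stops short of a proof is precisely the place you flag as the main technical obstacle: showing that the per-epoch indicator vector $\bm{A}=(A_1,\dots,A_{k\supn})$ (your $\bm{B}=Z\bm{Y}$) is associated once the oversampling rate is large enough. Saying ``verify the defining inequality by conditioning on $Z$, reducing to a finite combinatorial check tuned by $\alpha$'' is not yet an argument: $f$ and $g$ range over \emph{all} nondecreasing functions on $\{0,1\}^{k\supn}$, there are exponentially many configurations on $\{Z=1\}$, and a configuration-by-configuration check has no obvious uniform structure. You also (correctly) note that a nonnegative pairwise covariance is weaker than association, so the ``elementary covariance calculation'' is not a stepping stone but a distraction. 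The paper closes this gap with a short, clean argument you would want to supply: by a theorem of \cite{EsaProWal_67}, it suffices to check the association inequality for \emph{binary-valued} nondecreasing $f,g$; any non-constant such pair has $f(\mathbf{0})=g(\mathbf{0})=0$ and $f(\mathbf{1})=g(\mathbf{1})=1$, so
\begin{align*}
\expect[f(\bm{A})g(\bm{A})] &\ge \Pr\bigl(\bm{A}=\mathbf{1}\bigr)=\frac{\Lambda\supn}{\Lambda\supn+\beta\supn}\,\frac{1}{\binom{n}{k\supn}},\\
\expect[f(\bm{A})] &\le \Pr\bigl(\bm{A}\ne\mathbf{0}\bigr)=\frac{\Lambda\supn}{\Lambda\supn+\beta\supn}\,p,
\end{align*}
and likewise for $g$, where $p$ is the probability that a job arrival touches at least one of queues $1,\dots,k\supn$. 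Choosing any $\beta\supn\ge\Lambda\supn\bigl(\binom{n}{k\supn}p^2-1\bigr)$ makes the lower bound on $\expect[fg]$ dominate the product of the upper bounds, simultaneously for every such $f,g$. That extreme-point bounding, not a finite case analysis, is the missing ingredient in your proposal.
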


\section{Proofs of Asymptotic Independence and Job Delay Asymptotics}\label{sec:proofs-asym-independence}

In this section, we prove the asymptotic independence and job delay asymptotics in Theorem~\ref{thm:asym-independence} and Corollary~\ref{cor:exp}.


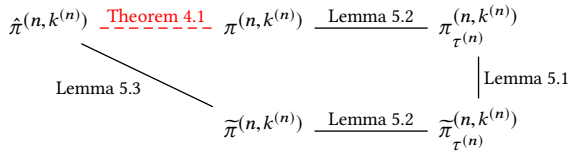
\begin{figure}
\centering
\begin{tikzcd}[column sep=huge]
\hat{\pi}^{(n,k\supn)}
\arrow[r, dash, dashed, red, "\text{Theorem~\ref{thm:asym-independence}}"]
\arrow[dr,dash,"\text{Lemma~\ref{LEM:PI-TILDE-VS-PI-HAT}}"']&
\pi^{(n,k\supn)}
\arrow[r, dash, "\text{Lemma~\ref{lem:convergence-rate}}"]&
\pi^{(n,k\supn)}_{\tau\supn}
\arrow[d, dash, "\text{Lemma~\ref{lem:independence-finite-time}}"]\\
&
\widetilde{\pi}^{(n,k\supn)}
\arrow[r, dash, "\text{Lemma~\ref{lem:convergence-rate}}"]&
\widetilde{\pi}^{(n,k\supn)}_{\tau\supn}
\end{tikzcd}
\caption{Distances in the proof of Theorem~\ref{thm:asym-independence}}
\label{fig:distances}
\end{figure}
\textbf{Proof Sketch.}
To prove Theorem~\ref{thm:asym-independence}, we couple each $n$-server system in the limited fork-join model, which we refer to as system $\mathcal{S}\supn$, with a system $\widetilde{\mathcal{S}}\supn$ in which the first $k\supn$ queues are \emph{independent}. We will specify $\widetilde{\mathcal{S}}\supn$ below.  Let $\widetilde{W}\supn_i(t)$ denote the workload of server $i$ at time $t$ in system $\widetilde{\mathcal{S}}\supn$. Let $\widetilde{\bm{W}}^{(n,k\supn)}(t)=\left(\widetilde{W}\supn_1(t),\dots,\widetilde{W}\supn_{k\supn}(t)\right)$ and $\bm{W}^{(n,k\supn)}(t)=\left(W\supn_1(t),\dots,W\supn_{k\supn}(t)\right)$. Then the proof will proceed in the following three steps, where we break down the distance $d_{TV}\left(\pi^{(n,k\supn)},\hat{\pi}^{(k\supn)}\right)$ in Theorem~\ref{thm:asym-independence} into three parts, illustrated in Figure~\ref{fig:distances}.

(i) We carefully choose a finite time $\tau\supn$ and consider systems $\mathcal{S}\supn$ and $\widetilde{\mathcal{S}}\supn$ at time $\tau\supn$.  We show in Lemma~\ref{lem:independence-finite-time} that the distribution of $\bm{W}^{(n,k\supn)}(\tau\supn)$, denoted by $\pi_{\tau\supn}^{(n,k\supn)}$, approaches the distribution of $\widetilde{\bm{W}}^{(n,k\supn)}(\tau\supn)$, denoted by $\widetilde{\pi}_{\tau\supn}^{(n,k\supn)}$, as $n\to\infty$.

(ii) We show in Lemma~\ref{lem:convergence-rate} that in both systems $\mathcal{S}\supn$ and $\widetilde{\mathcal{S}}\supn$, the finite-time distributions $\pi_{\tau\supn}^{(n,k\supn)}$ and $\widetilde{\pi}_{\tau\supn}^{(n,k\supn)}$ are close to the stationary distributions, $\pi^{(n,k\supn)}$ and $\widetilde{\pi}^{(n,k\supn)}$, respectively.

(iii) We show in Lemma~\ref{LEM:PI-TILDE-VS-PI-HAT} that the stationary distribution $\widetilde{\pi}^{(n,k\supn)}$ in system $\widetilde{\mathcal{S}}\supn$ is close to the product distribution $\hat{\pi}^{(k\supn)}$ in Theorem~\ref{thm:asym-independence}. Note that both $\widetilde{\pi}^{(n,k\supn)}$ and $\hat{\pi}^{(k\supn)}$ are for $k\supn$ \emph{independent} workloads, but we will see that their loads are different.

\textbf{Coupling.}
Now we specify the coupling between $\mathcal{S}\supn$ and $\widetilde{\mathcal{S}}\supn$.  Both systems have $n$ servers and the queues are all empty at time $0$, i.e., $W\supn_i(0)=\widetilde{W}\supn_i(0)=0$ for all $i=1,\dots,n$.  When there is a job arrival to system $\mathcal{S}\supn$, we let a job also arrive to system $\widetilde{\mathcal{S}}\supn$.  Recall that the job arrival in $\mathcal{S}\supn$ selects $k\supn$ distinct queues uniformly at random and sends one task to each queue. If it selects \emph{at most one} queue from the set $\bigl\{1,2,\dots,k\supn\bigr\}$, then we let the job arrival in $\widetilde{\mathcal{S}}\supn$ send its tasks to queues with the same indices as those in $\mathcal{S}\supn$.  Otherwise, suppose it selects queues $i_1,i_2,\dots,i_m$ from $\bigl\{1,2,\dots,k\supn\bigr\}$ with $2\le m\le k\supn$.  Then we let the job arrival in $\widetilde{\mathcal{S}}\supn$ send one task to a queue chosen uniformly at random from $i_1,i_2,\dots,i_m$, kill the other $m-1$ tasks, and send the remaining $k\supn-m$ tasks to queues with the same indices as those in $\mathcal{S}\supn$.  For each pair of tasks in $\mathcal{S}\supn$ and $\widetilde{\mathcal{S}}\supn$ that are sent to queues with the same indices, we let them have the same service time.

It can be verified that in system $\widetilde{\mathcal{S}}\supn$, the queues $1,2,\dots,k\supn$ are independent M/G/1 queues with arrival rate $\widetilde{\lambda}\supn$ and mean service time $1/\mu$, where
\begin{equation}\label{eq:lambda-tilde}
\widetilde{\lambda}\supn=\frac{\Lambda\supn}{k\supn}\Biggl(1-\frac{\binom{n-k\supn}{k\supn}}{\binom{n}{k\supn}}\Biggr).
\end{equation}
Let $\widetilde{\rho}\supn=\frac{\widetilde{\lambda}\supn}{\mu}$ denote the load of each queue. Note that $\widetilde{\lambda}\supn < \lambda$ but $\widetilde{\lambda}\supn\to\lambda$ as $n\to\infty$.  Specifically,
\begin{equation*}
\lambda-\widetilde{\lambda}\supn=O\biggl(\frac{(k\supn)^2}{n}\biggr).
\end{equation*}

\subsection{Lemmas Needed for Theorem~\ref{thm:asym-independence}}
We first show in Lemma~\ref{lem:independence-finite-time} that, over a finite time interval with proper length, any $k\supn$ queues in the $n$-server system $\mathcal{S}\supn$ are asymptotically independent as the number of servers $n\to\infty$.
\begin{lemma}\label{lem:independence-finite-time}
For any time $\tau\supn$ with $\tau\supn=O\Bigl(\frac{n^{1/2}}{k\supn}\Bigr)$,
\begin{equation}\label{eq:independence-finite-time}
d_{TV}\Bigl(\pi_{\tau\supn}^{(n,k\supn)},\widetilde{\pi}_{\tau\supn}^{(n,k\supn)}\Bigr)=O\Biggl(\biggl(\frac{k\supn}{n^{1/4}}\biggr)^2\Biggr),
\end{equation}
which goes to $0$ as $n\to\infty$.
\end{lemma}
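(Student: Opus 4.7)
The plan is to apply the standard coupling inequality to the coupling of $\mathcal{S}\supn$ and $\widetilde{\mathcal{S}}\supn$ already specified. Call a job arrival in $\mathcal{S}\supn$ a \emph{bad arrival} if its $k\supn$ selected queues contain at least two indices from $\{1,\ldots,k\supn\}$. By construction of the coupling, every non-bad arrival sends the same tasks (with the same service times) to the same queue indices in both systems, and tasks routed to indices larger than $k\supn$ do not influence $\bm{W}^{(n,k\supn)}(\cdot)$ or $\widetilde{\bm{W}}^{(n,k\supn)}(\cdot)$. Hence, on the event that no bad arrival occurs in $[0,\tau\supn]$, the FIFO workload dynamics give $\bm{W}^{(n,k\supn)}(\tau\supn)=\widetilde{\bm{W}}^{(n,k\supn)}(\tau\supn)$ pathwise, and the coupling inequality yields
\begin{equation*}
d_{TV}\Bigl(\pi_{\tau\supn}^{(n,k\supn)},\widetilde{\pi}_{\tau\supn}^{(n,k\supn)}\Bigr)\le \Pr\bigl(\text{at least one bad arrival in } [0,\tau\supn]\bigr).
\end{equation*}

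I would then bound the right-hand side by the first moment of the bad-arrival count. For a single job, the number $M$ of tasks sent to $\{1,\ldots,k\supn\}$ is hypergeometric, and a union bound over pairs of target indices gives
\begin{equation*}
\Pr(M\ge 2)\le \binom{k\supn}{2}\cdot\frac{k\supn(k\supn-1)}{n(n-1)}=O\Bigl(\frac{(k\supn)^4}{n^2}\Bigr).
\end{equation*}
Since jobs arrive as a Poisson process of rate $\Lambda\supn=n\lambda/k\supn$ and each is independently bad, the number of bad arrivals in $[0,\tau\supn]$ has expectation $\Lambda\supn\tau\supn\Pr(M\ge 2)=O((k\supn)^3\tau\supn/n)$. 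Markov's inequality then gives that the probability of at least one bad arrival is also $O((k\supn)^3\tau\supn/n)$, and substituting $\tau\supn = O(n^{1/2}/k\supn)$ produces the claimed bound $O((k\supn)^2/n^{1/2})=O((k\supn/n^{1/4})^2)$.

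The main obstacle is really only the pathwise argument in the first step; once it is verified that non-bad arrivals preserve equality of the first $k\supn$ workloads, everything else is a routine first-moment calculation on hypergeometric selection, with Markov's inequality comfortably loose enough to absorb the rate. The horizon $\tau\supn=O(n^{1/2}/k\supn)$ appears naturally: it is the largest window over which the expected number of bad arrivals stays $o(1)$ under $k\supn=o(n^{1/4})$, which is exactly what the later steps in the proof of Theorem~\ref{thm:asym-independence} need in order to bring the finite-time distributions close to the stationary ones.
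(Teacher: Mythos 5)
Your proposal is correct and follows essentially the same route as the paper: both reduce the total variation distance, via the specified coupling, to the probability that some arrival in $[0,\tau\supn]$ sends two or more tasks to $\{1,\dots,k\supn\}$, then bound that probability by $\Lambda\supn\tau\supn\cdot O\bigl((k\supn)^4/n^2\bigr)$. The only cosmetic difference is that the paper evaluates $\Pr(\mathcal{E})=1-e^{-\Lambda\supn\tau\supn(1-p\supn)}$ exactly via Poisson thinning and then linearizes, whereas you go directly through Markov's inequality on the expected number of bad arrivals; the two yield the same order.
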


Lemma~\ref{lem:convergence-rate} states that the time interval $\tau\supn$ in Lemma~\ref{lem:independence-finite-time} is long enough for the systems $\mathcal{S}\supn$ and $\widetilde{\mathcal{S}}\supn$ to be close to steady state.
\begin{lemma}\label{lem:convergence-rate}
For any time $\tau\supn$ with $\tau\supn=O\Bigl(\frac{n^{1/2}}{k\supn}\Bigr)$,
\begin{equation}\label{eq:convergence}
d_{TV}\Bigl(\pi_{\tau\supn}^{(n,k\supn)},\pi^{(n,k\supn)}\Bigr)=O\Biggl(\biggl(\frac{k\supn}{n^{1/4}}\biggr)^2\Biggr),
\end{equation}
and
\begin{equation}\label{eq:convergence-independent}
d_{TV}\Bigl(\widetilde{\pi}_{\tau\supn}^{(n,k\supn)},\widetilde{\pi}^{(n,k\supn)}\Bigr)=O\Biggl(\biggl(\frac{k\supn}{n^{1/4}}\biggr)^2\Biggr).
\end{equation}
\end{lemma}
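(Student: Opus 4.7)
The plan is to establish both \eqref{eq:convergence} and \eqref{eq:convergence-independent} by the same coupling argument, described first for $\mathcal{S}\supn$. I would construct on a common probability space two copies of $\mathcal{S}\supn$: an ``empty-start'' copy $\bm{W}\supn(\cdot)$ with $\bm{W}\supn(0)=\bm{0}$, and a ``stationary'' copy $\bm{W}^{*(n)}(\cdot)$ with $\bm{W}^{*(n)}(0)\sim\pi^{(n,k\supn)}$, both driven by the same job arrival epochs, the same random choice of $k\supn$ servers at each arrival, and the same service time for each corresponding task. The standard monotonicity of reflected workload processes under a common driving input then gives $W\supn_i(t)\le W^{*(n)}_i(t)$ for every $t\ge 0$ and every queue~$i$.

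For each queue $i$, let $B^*_i=\inf\{t\ge 0:W^{*(n)}_i(t)=0\}$ denote the length of the busy period the stationary copy initiates from its random initial workload. Monotonicity together with $W\supn_i(t)\ge 0$ forces $W\supn_i(B^*_i)=0$, and since the two copies are driven identically afterwards, $W\supn_i(t)=W^{*(n)}_i(t)$ for all $t\ge B^*_i$. Applying the coupling inequality to the first $k\supn$ coordinates, followed by a union bound and symmetry, I obtain
\begin{equation*}
d_{TV}\Bigl(\pi_{\tau\supn}^{(n,k\supn)},\pi^{(n,k\supn)}\Bigr)\le\sum_{i=1}^{k\supn}\Pr\bigl(B^*_i>\tau\supn\bigr)=k\supn\,\Pr\bigl(B^*_1>\tau\supn\bigr).
\end{equation*}
A standard M/G/1 busy-period identity yields $\expect[B^*_1\mid W^{*(n)}_1(0)=w]=w/(1-\rho)$, hence $\expect[B^*_1]=\expect[W\supn_1(\infty)]/(1-\rho)$; the Pollaczek--Khinchine formula then gives $\expect[W\supn_1(\infty)]=\lambda\expect[X^2]/(2(1-\rho))$, a constant independent of~$n$. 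Markov's inequality produces $\Pr(B^*_1>\tau\supn)\le C/\tau\supn$, and for $\tau\supn$ of the stated order $n^{1/2}/k\supn$ the right-hand side becomes $O((k\supn)^2/n^{1/2})=O((k\supn/n^{1/4})^2)$, establishing \eqref{eq:convergence}.

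For \eqref{eq:convergence-independent} the same argument is applied inside $\widetilde{\mathcal{S}}\supn$: its first $k\supn$ queues are independent M/G/1 queues with load $\widetilde{\rho}\supn$, and I couple an empty-start copy with a stationary-start copy using shared arrivals and services within $\widetilde{\mathcal{S}}\supn$. The key point to check is that the Pollaczek--Khinchine constant is bounded uniformly in~$n$; since $\widetilde{\rho}\supn\le\rho<1$ and $G$ has a fixed finite second moment, the corresponding $\expect[\widetilde{B}^*_1]$ is bounded by a constant, and the same Markov-plus-union-bound chain delivers the desired $O((k\supn/n^{1/4})^2)$ estimate. The main obstacle is really the uniform control of the busy-period mean---everything else is structural---so the proof rests on verifying that Pollaczek--Khinchine provides an $n$-uniform bound for both systems.
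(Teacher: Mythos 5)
Your coupling argument is essentially the same as the paper's: both construct an empty-start copy and a stationary-start copy with shared arrivals and services, invoke monotonicity to get a pathwise dominance, reduce the total variation distance to the tail probability that some stationary-start busy period exceeds $\tau\supn$, and close with Markov's inequality and a union bound over the $k\supn$ queues. The only cosmetic difference is that you reconstitute the expected busy-period length from $\expect[B\mid W=w]=w/(1-\rho)$ plus Pollaczek--Khinchine, whereas the paper cites the resulting constant $\frac{\lambda g_2}{2(1-\rho)^2}$ directly; your observation that $\widetilde{\rho}\supn\le\rho<1$ gives $n$-uniform control for the $\widetilde{\mathcal{S}}\supn$ case matches the paper's handling of \eqref{eq:convergence-independent}.
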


The distribution $\widetilde{\pi}^{(n,k\supn)}$ is the joint distribution of the steady-state workloads of $k\supn$ independent queues, each with arrival rate $\widetilde{\lambda}\supn$ and mean service time $1/\mu$.  Since $\widetilde{\lambda}\supn\to\lambda$ as $n\to\infty$, $\widetilde{\pi}^{(n,k\supn)}$ approaches the product distribution $\hat{\pi}^{(k\supn)}$ in Theorem~\ref{thm:asym-independence}, which is for $k\supn$ independent queues each with arrival rate $\lambda$ and mean service time $1/\mu$.  This is formally stated in Lemma~\ref{LEM:PI-TILDE-VS-PI-HAT}.
\begin{lemma}\label{LEM:PI-TILDE-VS-PI-HAT}
\begin{equation}\label{EQ:PI-TILDE-VS-PI-HAT}
d_{TV}\Bigl(\widetilde{\pi}^{(n,k\supn)},\hat{\pi}^{(k\supn)}\Bigr)=O\Biggl(\biggl(\frac{k\supn}{n^{1/4}}\biggr)^2\Biggr).
\end{equation}
\end{lemma}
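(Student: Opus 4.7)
The plan is to sandwich $d_{TV}\bigl(\widetilde{\pi}^{(n,k\supn)}, \hat{\pi}^{(k\supn)}\bigr)$ via the triangle inequality through two finite-time distributions, and to control the finite-time gap by a simple thinning coupling between two product systems that differ only in their arrival rate. The observation driving the whole argument is that $\lambda-\widetilde{\lambda}\supn=O((k\supn)^2/n)$ is small, so over a time window of length $O(n^{1/2}/k\supn)$ very few ``extra'' arrivals distinguish the two systems.

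First, I introduce an auxiliary reference system $\hat{\mathcal{S}}\supn$ consisting of $k\supn$ independent M/G/1 queues, each with arrival rate $\lambda$, service c.d.f.\ $G$, and empty at time $0$; write $\hat{\pi}^{(k\supn)}_{\tau\supn}$ for its time-$\tau\supn$ distribution, noting that its stationary distribution is exactly $\hat{\pi}^{(k\supn)}$. By the construction of $\widetilde{\mathcal{S}}\supn$ earlier in this section, its first $k\supn$ coordinates are $k\supn$ independent M/G/1 queues at rate $\widetilde{\lambda}\supn$, so $\widetilde{\pi}^{(n,k\supn)}_{\tau\supn}$ plays the analogous role. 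I couple $\hat{\mathcal{S}}\supn$ and $\widetilde{\mathcal{S}}\supn$ queue by queue via Poisson thinning: each arrival to queue $i$ of $\hat{\mathcal{S}}\supn$ (a rate-$\lambda$ Poisson process) is, independently, either routed to queue $i$ of $\widetilde{\mathcal{S}}\supn$ with probability $\widetilde{\lambda}\supn/\lambda$ and given the same service time, or declared an ``extra'' that is only processed in $\hat{\mathcal{S}}\supn$.

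Fixing $\tau\supn=n^{1/2}/k\supn$ as in Lemma~\ref{lem:convergence-rate}, the total number of extras across all $k\supn$ queues over $[0,\tau\supn]$ is Poisson with mean
\begin{equation*}
k\supn(\lambda-\widetilde{\lambda}\supn)\tau\supn = O\Bigl(k\supn\cdot(k\supn)^2/n\cdot n^{1/2}/k\supn\Bigr) = O\Bigl((k\supn)^2/n^{1/2}\Bigr).
\end{equation*}
On the event that no extras occur the two coupled systems are pathwise identical on $[0,\tau\supn]$, so the standard coupling inequality yields $d_{TV}\bigl(\hat{\pi}^{(k\supn)}_{\tau\supn},\widetilde{\pi}^{(n,k\supn)}_{\tau\supn}\bigr)=O\bigl((k\supn)^2/n^{1/2}\bigr)$. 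Combining this with the bound on how close each of the two time-$\tau\supn$ laws is to its stationary counterpart — for $\widetilde{\pi}^{(n,k\supn)}$ this is exactly \eqref{eq:convergence-independent} in Lemma~\ref{lem:convergence-rate}, and for $\hat{\pi}^{(k\supn)}$ the identical proof applies, since $\hat{\mathcal{S}}\supn$ is itself a $k\supn$-fold product of independent M/G/1 queues with load $\rho<1$ — the triangle inequality produces
\begin{equation*}
d_{TV}\bigl(\widetilde{\pi}^{(n,k\supn)},\hat{\pi}^{(k\supn)}\bigr) = O\Bigl((k\supn)^2/n^{1/2}\Bigr) = O\Bigl((k\supn/n^{1/4})^2\Bigr),
\end{equation*}
which is the stated bound.

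The main obstacle is not the coupling itself, which is clean, but confirming that the convergence estimate in Lemma~\ref{lem:convergence-rate} transfers verbatim from rate $\widetilde{\lambda}\supn$ to rate $\lambda$. This should go through because the proof of that lemma relies only on the single-queue mixing rate together with the product structure, and both $\widetilde{\rho}\supn$ and $\rho$ are uniformly bounded away from $1$ by the same fixed gap; in particular the per-queue constants degrade by only an $O(\lambda-\widetilde{\lambda}\supn)$ amount, which is absorbed into the stated order.
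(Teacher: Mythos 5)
Your proof follows essentially the same route as the paper's: the same triangle-inequality decomposition through the time-$\tau\supn$ laws, the same per-queue Poisson thinning coupling between the rate-$\widetilde{\lambda}\supn$ and rate-$\lambda$ product systems, and the same estimate $k\supn\tau\supn(\lambda-\widetilde{\lambda}\supn)=O((k\supn/n^{1/4})^2)$ for the coupling-failure probability, with Lemma~\ref{lem:convergence-rate} (and its verbatim analogue for the rate-$\lambda$ product system) handling the two stationary-vs-finite-time gaps. If anything, you are slightly more careful than the paper in flagging that Lemma~\ref{lem:convergence-rate} as stated applies to $\widetilde{\pi}^{(n,k\supn)}$ and $\pi^{(n,k\supn)}$, so its application to $\hat{\pi}^{(k\supn)}$ requires the (immediate) observation that the identical busy-period argument works for any product of M/G/1 queues with load bounded away from one.
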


\subsection{Proof of Theorem~\ref{thm:asym-independence} Given Lemmas}
\begin{proof}
The proof of the asymptotic independence in \eqref{eq:asym-independence} in Theorem~\ref{thm:asym-independence} is straightforward given the lemmas.  Pick any $\tau\supn$ with $\tau\supn=O\Bigl(\frac{n^{1/2}}{k\supn}\Bigr)$.  Then
\begin{align*}
&\mspace{23mu}d_{TV}\Bigl(\pi^{(n,k\supn)},\hat{\pi}^{(k\supn)}\Bigr)\\
&\le d_{TV}\Bigl(\pi^{(n,k\supn)},\pi_{\tau\supn}^{(n,k\supn)}\Bigr)+d_{TV}\Bigl(\pi_{\tau\supn}^{(n,k\supn)},\widetilde{\pi}_{\tau\supn}^{(n,k\supn)}\Bigr)\\
&\mspace{23mu}+d_{TV}\Bigl(\widetilde{\pi}_{\tau\supn}^{(n,k\supn)},\widetilde{\pi}^{(n,k\supn)}\Bigr)+d_{TV}\Bigl(\widetilde{\pi}^{(n,k\supn)},\hat{\pi}^{(k\supn)}\Bigr)\\
&=O\Biggl(\biggl(\frac{k\supn}{n^{1/4}}\biggr)^2\Biggr).
\end{align*}
Therefore, under the condition that $k\supn=o(n^{1/4})$, there holds
\begin{equation*}
\lim_{n\to\infty}d_{TV}\Bigl(\pi^{(n,k\supn)},\hat{\pi}^{(k\supn)}\Bigr)=0.
\end{equation*}

Next we prove the job delay asymptotics in \eqref{eq:converge-cdf-general} in Theorem~\ref{thm:asym-independence}.  Recall that $\pi^{(n,k\supn)}$ and $\hat{\pi}^{(k\supn)}$ are distributions of workloads.  Below we compute the distributions of $T\supn$ and $\hat{T}\supn$ using $\pi^{(n,k\supn)}$ and $\hat{\pi}^{(k\supn)}$, which allows us to bound the distance between the distributions of $T\supn$ and $\hat{T}\supn$ using $d_{TV}\left(\pi^{(n,k\supn)},\hat{\pi}^{(k\supn)}\right)$.  By the representations of $T\supn$ and $\hat{T}\supn$ in \eqref{eq:T} and \eqref{eq:That}, we have that for any $\tau\ge 0$,
\begin{align*}
&\mspace{23mu}\bigl|\Pr\bigl(T\supn\le \tau\bigr)-\Pr\bigl(\hat{T}\supn\le \tau\bigr)\bigr|\\
&=\Biggl|\int_{\bm{w}\in\mathbb{R}_+^{k\supn}}\Biggl(\prod_{i=1}^{k\supn}\Pr(w_i+X_i\le\tau)\Biggr)d\pi^{(n,k\supn)}(\bm{w})\\
&\mspace{24mu}-\int_{\bm{w}\in\mathbb{R}_+^{k\supn}}\Biggl(\prod_{i=1}^{k\supn}\Pr(w_i+X_i\le\tau)\Biggr)d\hat{\pi}^{(k\supn)}(\bm{w})\Biggr|\\
&\le \int_{\bm{w}\in\mathbb{R}_+^{k\supn}}\Biggl(\prod_{i=1}^{k\supn}\Pr(w_i+X_i\le\tau)\Biggr)\cdot\bigl|d\pi^{(n,k\supn)}(\bm{w})-d\hat{\pi}^{(k\supn)}(\bm{w})\bigr|\\
&\le \int_{\bm{w}\in\mathbb{R}_+^{k\supn}}\bigl|d\pi^{(n,k\supn)}(\bm{w})-d\hat{\pi}^{(k\supn)}(\bm{w})\bigr|\\
&=2 d_{TV}\Bigl(\pi^{(n,k\supn)},\hat{\pi}^{(k\supn)}\Bigr).
\end{align*}
Therefore,
\begin{align*}
&\mspace{23mu}\lim_{n\to\infty}\sup_{\tau\ge 0} \left|\Pr(T\supn\le\tau)-\Pr\bigl(\hat{T}\supn\le \tau\bigr)\right|\\
&\le \lim_{n\to\infty}2 d_{TV}\Bigl(\pi^{(n,k\supn)},\hat{\pi}^{(k\supn)}\Bigr)\\
&=0.
\end{align*}
\end{proof}

\subsection{Proof of Lemmas}
\subsection*{\normalsize{Proof of Lemma~\ref{lem:independence-finite-time}}}
\begin{proof}
In order to bound $d_{TV}\Bigl(\pi_{\tau\supn}^{(n,k\supn)},\widetilde{\pi}_{\tau\supn}^{(n,k\supn)}\Bigr)$, we first write
\begin{align*}
&\mspace{25mu}d_{TV}\Bigl(\pi_{\tau\supn}^{(n,k\supn)},\widetilde{\pi}_{\tau\supn}^{(n,k\supn)}\Bigr)\\
&\le\Pr\Bigl(\bm{W}^{(n,k\supn)}(\tau\supn)\neq\widetilde{\bm{W}}^{(n,k\supn)}(\tau\supn)\Bigr)\\
&\le \Pr\Bigl(\bm{W}^{(n,k\supn)}(t)\neq\widetilde{\bm{W}}^{(n,k\supn)}(t)\text{ for some }t\in[0,\tau\supn]\Bigr),
\end{align*}
where the first inequality follows from a standard property of total variation distance.
By the coupling between $\mathcal{S}\supn$ and $\widetilde{\mathcal{S}}\supn$, $\bm{W}^{(n,k\supn)}(t)$ and $\widetilde{\bm{W}}^{(n,k\supn)}(t)$ are different for some time $t\in[0,\tau\supn]$ only when at least one job arrival during $[0,\tau\supn]$ selects more than one queue from $\bigl\{1,\dots,k\supn\bigr\}$ in system $\mathcal{S}\supn$.  We denote this event by $\mathcal{E}$.  Then
\begin{equation*}
d_{TV}\Bigl(\pi_{\tau\supn}^{(n,k\supn)},\widetilde{\pi}_{\tau\supn}^{(n,k\supn)}\Bigr)\le \Pr(\mathcal{E}).
\end{equation*}
So it suffices to prove that
\begin{align*}
\Pr(\mathcal{E})
&=O\Biggl(\biggl(\frac{k\supn}{n^{1/4}}\biggr)^2\Biggr)
\end{align*}
for $\tau\supn$ with $\tau\supn=O\Bigl(\frac{n^{1/2}}{k\supn}\Bigr)$.  The remainder of this proof is dedicated to bounding $\Pr(\mathcal{E})$.

Let $p\supn$ denote the probability for a job arrival to select less than or equal to $1$ queue from queues $1,2,\dots,k\supn$ in system $\mathcal{S}\supn$. Then
\begin{equation*}
p\supn=\frac{\binom{n-k\supn}{k\supn}}{\binom{n}{k\supn}}+\frac{k\supn\binom{n-k\supn}{k\supn-1}}{\binom{n}{k\supn}}.
\end{equation*}
Let $A$ be the number of job arrivals during $[0,\tau\supn]$. Then
\begin{align}
\Pr(\mathcal{E})
&=\sum_{j=0}^{\infty}\Pr(A=j)\Pr(\mathcal{E}\mid A=j)\nonumber\\
&=\sum_{j=0}^{\infty}\frac{(\Lambda\supn\tau\supn
)^je^{-\Lambda\supn\tau\supn}}{j!}\Bigl(1-\bigl(p\supn\bigr)^j\Bigr)\nonumber\\
&=1-e^{-\Lambda\supn\tau\supn(1-p\supn)},\label{eq:PrE}
\end{align}
where \eqref{eq:PrE} follows from the definition of the Poisson generating function.  We calculate $p\supn$ as follows:
\begin{align*}
p\supn&=\frac{\binom{n-k\supn}{k\supn}}{\binom{n}{k\supn}}+\frac{k\supn\binom{n-k\supn}{k\supn-1}}{\binom{n}{k\supn}}\\
&=\frac{(n-k\supn)!}{(n-2k\supn)!}\frac{(n-k\supn)!}{n!}\biggl(1+\frac{(k\supn)^2}{n-2k\supn+1}\biggr)\\
&=\biggl(1-\frac{k\supn}{n}\biggr)\biggl(1-\frac{k\supn}{n-1}\biggr)\dots\biggl(1-\frac{k\supn}{n-k\supn+1}\biggr)\\
&\mspace{21mu}\cdot\biggl(1+\frac{(k\supn)^2}{n-2k\supn+1}\biggr)\\
&\ge \biggl(1-\frac{k\supn}{n-k\supn+1}\biggr)^{k\supn}\biggl(1+\frac{(k\supn)^2}{n-k\supn+1}\biggr).
\end{align*}
Since
\begin{align*}
&\mspace{21mu}\biggl(1-\frac{k\supn}{n-k\supn+1}\biggr)^{k\supn}\\
&=e^{k\supn\ln\bigl(1-\frac{k\supn}{n-k\supn+1}\bigr)}\\
&=e^{-\frac{(k\supn)^2}{n-k\supn+1}+O\bigl(\frac{(k\supn)^3}{(n-k\supn+1)^2}\bigr)}\\
&=1-\frac{(k\supn)^2}{n-k\supn+1}+O\biggl(\frac{(k\supn)^4}{(n-k\supn+1)^2}\biggr),
\end{align*}
we have
\begin{align*}
1-p\supn&\le 1-\biggl(1-\frac{k\supn}{n-k\supn+1}\biggr)^{k\supn}\biggl(1+\frac{(k\supn)^2}{n-k\supn+1}\biggr)\\
&=O\biggl(\frac{(k\supn)^4}{(n-k\supn+1)^2}\biggr).
\end{align*}
Recall that
\begin{equation*}
\Lambda\supn k\supn=n\lambda,\quad \tau\supn=O\left(\frac{n^{1/2}}{k\supn}\right).
\end{equation*}
Thus,
\begin{align*}
\Lambda\supn\tau\supn(1-p\supn)
&=O\Biggl(\biggl(\frac{k\supn}{n^{1/4}}\biggr)^2\Biggr).
\end{align*}
Consequently, inserting this to \eqref{eq:PrE} yields
\begin{align*}
\Pr(\mathcal{E})
&=O\Biggl(\biggl(\frac{k\supn}{n^{1/4}}\biggr)^2\Biggr),
\end{align*}
which completes the proof of Lemma~\ref{lem:independence-finite-time}.
\end{proof}

\subsection*{\normalsize{Proof of Lemma~\ref{lem:convergence-rate}}}
\begin{proof}
We first prove \eqref{eq:convergence}. For system $\mathcal{S}\supn$, we consider the following coupling between two copies of the workload process. In one copy, the system starts from empty queues, i.e., this is the workload process $(\bm{W}\supn(t),t\ge 0)$ we have introduced. The other copy, which we denote by $(\overline{\bm{W}}\supn(t),t\ge 0)$, starts from its stationary distribution, i.e., the distribution of $\overline{\bm{W}}\supn(0)$ is $\pi^{(n)}$. Then the distribution of $\overline{\bm{W}}\supn(t)$ is $\pi^{(n)}$ for any $t$. We let these two copies have the same arrival processes, and each arriving task has the same service time under both queueing processes.

By this coupling, $W\supn_i(t)\le \overline{W}\supn_i(t)$ for any time $t$ and any~$i$. For each $i=1,2,\dots,k\supn$, let $\tau\supn_i$ be the earliest time that the workload $\overline{W}\supn_i$ is $0$, i.e.,
\begin{equation*}
\tau\supn_i=\min\Bigl\{t\colon \overline{W}\supn_i(u)=0\text{ for some }u\in[0,t]\Bigr\}.
\end{equation*}
Let
\begin{equation*}
\tau\supn_0=\max\Bigl\{\tau\supn_1,\dots,\tau\supn_{k\supn}\Bigr\}.
\end{equation*}
Then $W\supn_i(t)= \overline{W}\supn_i(t)$ for any $t\ge \tau\supn_0$ and any $i=1,2,\dots,k\supn$.

To show \eqref{eq:convergence} in Lemma~\ref{lem:convergence-rate}, which we restate here for reference
\begin{equation}\tag*{(\ref{eq:convergence}) (Restated)}
d_{TV}\Bigl(\pi_{\tau\supn}^{(n,k\supn)},\pi^{(n,k\supn)}\Bigr)=O\Biggl(\biggl(\frac{k\supn}{n^{1/4}}\biggr)^2\Biggr),
\end{equation}
it suffices to prove that
\begin{equation}\label{eq:sufficient}
\Pr(\tau\supn_0>\tau\supn)=O\Biggl(\biggl(\frac{k\supn}{n^{1/4}}\biggr)^2\Biggr).
\end{equation}
To see that this is sufficient, we first note that
\begin{align}
&\mspace{23mu}d_{TV}\Bigl(\pi_{\tau\supn}^{(n,k\supn)},\pi^{(n,k\supn)}\Bigr)\nonumber\\
&\le \Pr\Bigl(\bm{W}^{(n,k\supn)}(\tau\supn)\neq\overline{\bm{W}}^{(n,k\supn)}(\tau\supn)\Bigr).\label{eq:dtv}
\end{align}
By the definition of $\tau\supn_0$, $\bm{W}^{(n,k\supn)}(\tau\supn)\neq\overline{\bm{W}}^{(n,k\supn)}(\tau\supn)$ if and only if $\tau\supn_0>\tau\supn$. So \eqref{eq:dtv} further implies
\begin{align}
d_{TV}\Bigl(\pi_{\tau\supn}^{(n,k\supn)},\pi^{(n,k\supn)}\Bigr)\le \Pr(\tau\supn_0>\tau\supn),
\end{align}
and thus \eqref{eq:sufficient} implies \eqref{eq:convergence}.

Now we prove \eqref{eq:sufficient}.  Note that the distribution of $\tau\supn_i$ does not depend on $n$ since each individual $(\overline{W}\supn_i(t),t\ge 0)$ evolves as an M/G/1 queue with arrival rate $\lambda$ and service time distribution $G$, and $\tau\supn_i$ is a busy period started by the amount of work in steady state. Thus, by standard results on busy periods (see, e.g., \cite{Har_13}),
\begin{equation}\label{eq:busy}
\expect[\tau\supn_i]=\frac{\lambda g_2}{2(1-\rho)^2},
\end{equation}
where $g_2$ is the second moment of $G$, which is a constant.  By Markov's inequality,
\begin{equation}
\Pr(\tau\supn_i>\tau\supn)\le \frac{\expect[\tau\supn_i]}{\tau\supn}.
\end{equation}
Since $\tau\supn_0=\max\{\tau\supn_1,\tau\supn_2,\dots,\tau\supn_{k\supn}\}$, by the union bound we have
\begin{align*}
\Pr(\tau\supn_0 > \tau\supn)&\le \sum_{i=1}^{k\supn}\Pr(\tau\supn_i > \tau\supn)\\
&\le \frac{k\supn}{\tau\supn}\frac{\lambda g_2}{2(1-\rho)^2}\\
&=O\Biggl(\biggl(\frac{k\supn}{n^{1/4}}\biggr)^2\Biggr).
\end{align*}
This is \eqref{eq:sufficient} and thus it completes the proof of \eqref{eq:convergence}.

The proof of \eqref{eq:convergence-independent} in Lemma~\ref{lem:convergence-rate} is very much similar to the proof of \eqref{eq:convergence}. We obtain \eqref{eq:convergence-independent} by noting that each $(\widetilde{W}\supn_i(t),t\ge 0)$ with $i=1,2,\dots,k\supn$ is an M/G/1 queue with arrival rate $\widetilde{\lambda}\supn<\lambda$ and following arguments similar to those in the proof of \eqref{eq:convergence}.
\end{proof}

\subsection*{\normalsize{Proof of Lemma~\ref{LEM:PI-TILDE-VS-PI-HAT}}}
\begin{proof}
This proof has a similar flavor to the proofs of Lemmas~\ref{lem:independence-finite-time} and \ref{lem:convergence-rate}. Recall that $\left(\widetilde{\bm{W}}^{(n,k\supn)}(t),t\ge 0\right)$, the workload processes of the first $k\supn$ queues in system $\widetilde{\mathcal{S}}\supn$, are $k\supn$ independent M/G/1 queues each with arrival rate $\widetilde{\lambda}\supn$ and service time distribution $G$.  We couple this with $\left(\hat{\bm{W}}^{(k\supn)}(t),t\ge 0\right)$, where $\hat{\bm{W}}^{(k\supn)}(t)=\bigl(\hat{W}_1(t),\dots,\hat{W}_{k\supn}(t)\bigr)$ is the workload vector of $k\supn$ independent M/G/1 queues each with arrival rate $\lambda$ and service time distribution $G$. Then $\hat{\pi}^{(k\supn)}$ is its stationary distribution.  We will prove the bound on  $d_{TV}\bigl(\widetilde{\pi}^{(n,k\supn)},\hat{\pi}^{(k\supn)}\bigr)$ by showing that $\left(\widetilde{\bm{W}}^{(n,k\supn)}(t),t\ge 0\right)$ and $\left(\hat{\bm{W}}^{(k\supn)}(t),t\ge 0\right)$ are close.

Now we specify the coupling.  All the queues start from empty, i.e., $\hat{W}_i(0)=\widetilde{W}\supn_i(0)=0$ for all $i=1,2,\dots,k\supn$. When there is a task arrival to some queue of $\hat{\bm{W}}^{(k\supn)}$, we let a task arrive to the corresponding queue of $\widetilde{\bm{W}}^{(n,k\supn)}$ with probability $\frac{\widetilde{\lambda}\supn}{\lambda}$, and let these two tasks require the same service time.  So with probability $1-\frac{\widetilde{\lambda}\supn}{\lambda}$ there is no task arrival to $\widetilde{\bm{W}}^{(n,k\supn)}$.

We pick a time $\tau\supn=O\Bigl(\frac{n^{1/2}}{k\supn}\Bigr)$. Let $\hat{\pi}^{(k\supn)}_{\tau\supn}$ denote the distribution of $\hat{\bm{W}}^{(k\supn)}(\tau\supn)$.  Then
\begin{multline*}
d_{TV}\Bigl(\widetilde{\pi}^{(n,k\supn)},\hat{\pi}^{(k\supn)}\Bigr)\le d_{TV}\Bigl(\widetilde{\pi}^{(n,k\supn)}_{\tau\supn},\hat{\pi}^{(k\supn)}_{\tau\supn}\Bigr)\\
+ d_{TV}\Bigl(\widetilde{\pi}^{(n,k\supn)}_{\tau\supn},\widetilde{\pi}^{(n,k\supn)}\Bigr)+d_{TV}\Bigl(\hat{\pi}^{(k\supn)}_{\tau\supn},\hat{\pi}^{(k\supn)}\Bigr).
\end{multline*}
Noting Lemma~\ref{lem:convergence-rate}, we have
\begin{align}
d_{TV}\Bigl(\widetilde{\pi}_{\tau\supn}^{(n,k\supn)},\widetilde{\pi}^{(n,k\supn)}\Bigr)&=O\Biggl(\biggl(\frac{k\supn}{n^{1/4}}\biggr)^2\Biggr),\label{eq:dis1}\\
d_{TV}\Bigl(\hat{\pi}_{\tau\supn}^{(k\supn)},\hat{\pi}^{(k\supn)}\Bigr)&=O\Biggl(\biggl(\frac{k\supn}{n^{1/4}}\biggr)^2\Biggr).\label{eq:dis2}
\end{align}
Next we bound $d_{TV}\Bigl(\widetilde{\pi}^{(n,k\supn)}_{\tau\supn},\hat{\pi}^{(k\supn)}_{\tau\supn}\Bigr)$ using arguments similar to those in the proof of Lemma~\ref{lem:independence-finite-time}. By the coupling, $\hat{\bm{W}}^{(k\supn)}(t)$ and $\widetilde{\bm{W}}^{(n,k\supn)}(t)$ are different for some $t\in[0,\tau\supn]$ only when some task arrives to $\hat{\bm{W}}^{(k\supn)}$ but not to $\widetilde{\bm{W}}^{(n,k\supn)}$. We denote this event by $\mathcal{E}$. Then
\begin{equation*}
d_{TV}\Bigl(\widetilde{\pi}^{(n,k\supn)}_{\tau\supn},\hat{\pi}^{(k\supn)}_{\tau\supn}\Bigr)\le\Pr(\mathcal{E}).
\end{equation*}
So the remainder of this proof is dedicated to bounding $\Pr(\mathcal{E})$.

Consider the time interval $[0,\tau\supn]$. Let $A$ be the number of task arrivals to $\hat{\bm{W}}^{(k\supn)}$ during this time interval. Then
\begin{align}
\Pr(\mathcal{E})
&=\sum_{j=0}^{\infty}\Pr(A=j)\Pr(\mathcal{E}\mid A=j)\nonumber\\
&\le\sum_{j=0}^{\infty}\frac{(\lambda k\supn\tau\supn
)^je^{-k\supn\lambda\tau\supn}}{j!}j\biggl(1-\frac{\widetilde{\lambda}\supn}{\lambda}\biggr)\label{eq:union}\\
&= k\supn\tau\supn(\lambda-\widetilde{\lambda}\supn),\nonumber
\end{align}
where we have used a union bound for \eqref{eq:union}. By definition,
\begin{align*}
\widetilde{\lambda}\supn&=\frac{\Lambda\supn}{ k\supn}\Biggl(1-\frac{\binom{n-k\supn}{k\supn}}{\binom{n}{k\supn}}\Biggr)\\
&\ge \frac{\Lambda\supn}{ k\supn}\Biggl(1-\biggl(1-\frac{k\supn}{n}\biggr)^{k\supn}\Biggr)\\
&=\frac{\Lambda\supn}{ k\supn}\biggl(\frac{(k\supn)^2}{n}+O\biggl(\frac{(k\supn)^4}{n^2}\biggr)\biggr)\\
&=\lambda+O\biggl(\frac{(k\supn)^2}{n}\biggr).
\end{align*}
Therefore,
\begin{align*}
\Pr(\mathcal{E})
&= O\Biggl(\biggl(\frac{k\supn}{n^{1/4}}\biggr)^2\Biggr),
\end{align*}
which completes the proof.
\end{proof}

\subsection{Proof of Corollary~\ref{cor:exp}}\label{subsec:proof-response-time-infinite}
\begin{namedthm*}{Corollary~\ref{cor:exp} (Restated)}
Consider an $n$-server system in the limited fork-join model with $k\supn=o(n^{1/4})$, job arrival rate $\Lambda\supn=n\lambda/k\supn$, and exponentially distributed service times with mean $1/\mu$.  Then the steady-state job delay, $T\supn$, converges as:
\begin{equation}\tag*{(\ref{eq:converge-cdf}) (Restated)}
\lim_{n\to\infty}\sup_{\tau\ge 0}\left|\Pr\bigl(T\supn\le \tau\bigr)-\left(1-e^{-(\mu-\lambda)\tau}\right)^{k\supn}\right|=0,
\end{equation}
Specifically, if $k\supn\to\infty$ as $n\to\infty$, then
\begin{equation}\tag*{(\ref{eq:converge-in-distr}) (Restated)}
\frac{T\supn}{H_{k\supn}/(\mu-\lambda)}\tod 1,\quad\text{as }n\to\infty,
\end{equation}
where $H_{k\supn}$ is the $k\supn$-th harmonic number, and further,
\begin{equation}\tag*{(\ref{eq:converge-expectation}) (Restated)}
\lim_{n\to\infty}\frac{\expect\bigl[T\supn\bigr]}{H_{k\supn}/(\mu-\lambda)}=1.
\end{equation}
\end{namedthm*}

\begin{proof}
When the service times are exponentially distributed, each queue is an M/M/1 queue and thus the \cdf\ of the task delay at each queue, $F$, is given by
\begin{equation*}
F(\tau)=1-e^{-(\mu-\lambda)\tau}.
\end{equation*}
Then the convergence in \eqref{eq:converge-cdf} directly follows from Theorem~\ref{thm:asym-independence}.

To prove the weak convergence of $\frac{T\supn}{H_{k\supn}/(\mu-\lambda)}$ in \eqref{eq:converge-in-distr}, we first note that
\begin{equation*}
\frac{\hat{T}\supn}{H_{k\supn}/(\mu-\lambda)}\tod 1,\quad\text{as }n\to\infty,
\end{equation*}
which is a direct implication of the standard result in the asymptotic theory of extremes (see, e.g., Theorem~8.12 in \cite{Das_08}).
Combining this with \eqref{eq:converge-cdf} yields \eqref{eq:converge-in-distr}.

To prove the convergence of the expectation in \eqref{eq:converge-expectation}, we actually need the stochastic dominance shown in Theorem~\ref{thm:independence-upper}. The expectation in \eqref{eq:converge-expectation} can be written as
\begin{align*}
\frac{\expect\bigl[T\supn\bigr]}{H_{k\supn}/(\mu-\lambda)}&=\int_{0}^{\infty}\Pr\biggl(\frac{T\supn}{H_{k\supn}/(\mu-\lambda)}> \tau\biggr)d\tau.
\end{align*}
By Theorem~\ref{thm:independence-upper}, for any $\tau\ge 0$,
\begin{equation*}
\Pr\biggl(\frac{T\supn}{H_{k\supn}/(\mu-\lambda)}> \tau\biggr)\le \Pr\biggl(\frac{\hat{T}\supn}{H_{k\supn}/(\mu-\lambda)}> \tau\biggr).
\end{equation*}
Since
\begin{equation*}
\frac{\hat{T}\supn}{H_{k\supn}/(\mu-\lambda)}\tod 1,\quad\text{as }n\to\infty,
\end{equation*}
and
\begin{align*}
\frac{\expect\bigl[\hat{T}\supn\bigr]}{H_{k\supn}/(\mu-\lambda)}=\int_{0}^{\infty}\Pr\biggl(\frac{\hat{T}\supn}{H_{k\supn}/(\mu-\lambda)}> \tau\biggr)d\tau=1,
\end{align*}
by the General Lebesgue Dominated Convergence Theorem (see, e.g., Theorem~19 in \cite{RoyFit_10}), we can take the limit inside the integral and using \eqref{eq:converge-in-distr}, get
\begin{align*}
\lim_{n\to\infty} \frac{\expect\bigl[T\supn\bigr]}{H_{k\supn}/(\mu-\lambda)}&=\int_{0}^{\infty}\lim_{n\to\infty}\Pr\biggl(\frac{T\supn}{H_{k\supn}/(\mu-\lambda)}> \tau\biggr)d\tau\\
&=\int_{0}^1 1 d\tau\\
&=1,
\end{align*}
which completes the proof.
\end{proof}

\subsection{Explaining $o(n^{1/4})$ and Possible Extensions}\label{sec:explain}
In this section we first explain in a bit more detail where the condition $k\supn=o(n^{1/4})$ comes from.  Recall that in our proof of Theorem~\ref{thm:asym-independence}, we choose a finite time instance $\tau\supn$ and decompose the distance $d_{TV}\left(\pi^{(n,k\supn)},\hat{\pi}^{(k\supn)}\right)$ in Theorem~\ref{thm:asym-independence} into the four distances in Lemmas~\ref{lem:independence-finite-time}--\ref{LEM:PI-TILDE-VS-PI-HAT} accordingly. To understand the result, it helps to return to these lemmas. Instead of directly choosing $\tau\supn$ as $O\Bigl(\frac{n^{1/2}}{k\supn}\Bigr)$, now we keep $\tau\supn$ as a variable.
\begin{itemize}[leftmargin=1em]
\item The distance in \eqref{eq:independence-finite-time} of Lemma~\ref{lem:independence-finite-time} becomes
\begin{equation}\label{eq:independence-taun}
d_{TV}\Bigl(\pi_{\tau\supn}^{(n,k\supn)},\widetilde{\pi}_{\tau\supn}^{(n,k\supn)}\Bigr)=\tau\supn O\left(\frac{(k\supn)^3}{n}\right).
\end{equation}
This is the distance between the limited fork-join system, $\mathcal{S}\supn$, and the system whose first $k\supn$ queues are independent, $\widetilde{\mathcal{S}}\supn$, at time $\tau\supn$. Intuitively, the longer $\tau\supn$ is, the more jobs are expected to arrive during $[0,\tau\supn]$, and thus the more likely it is that the first $k\supn$ queues in $\mathcal{S}\supn$ deviate from the independent queues in $\widetilde{\mathcal{S}}\supn$.  Careful calculation yields that the distance between $\mathcal{S}\supn$ and $\widetilde{\mathcal{S}}\supn$ at time $\tau\supn$ increases linearly with $\tau\supn$ as shown in~\eqref{eq:independence-taun}.

\item The distances in \eqref{eq:convergence} and \eqref{eq:convergence-independent} in Lemma~\ref{lem:convergence-rate} become
\begin{equation}\label{eq:convergence-taun}
d_{TV}\Bigl(\pi_{\tau\supn}^{(n,k\supn)},\pi^{(n,k\supn)}\Bigr)=\frac{1}{\tau\supn}O(k\supn),
\end{equation}
and
\begin{equation}\label{eq:convergence-independent-taun}
d_{TV}\Bigl(\widetilde{\pi}_{\tau\supn}^{(n,k\supn)},\widetilde{\pi}^{(n,k\supn)}\Bigr)=\frac{1}{\tau\supn}O(k\supn).
\end{equation}
These are the distances between the system state at time $\tau\supn$ and the steady state for systems $\mathcal{S}\supn$ and $\widetilde{\mathcal{S}}\supn$, respectively. Intuitively, a long $\tau\supn$ brings the system close to steady state. So as shown in \eqref{eq:convergence-taun} and \eqref{eq:convergence-independent-taun}, these distances decrease with $\tau\supn$.

\item The distance in \eqref{EQ:PI-TILDE-VS-PI-HAT} of Lemma~\ref{LEM:PI-TILDE-VS-PI-HAT} becomes
\begin{equation}\label{EQ:PI-TILDE-VS-PI-HAT-TAUN}
d_{TV}\Bigl(\widetilde{\pi}^{(n,k\supn)},\hat{\pi}^{(k\supn)}\Bigr)=\tau\supn O\left(\frac{(k\supn)^3}{n}\right)+\frac{1}{\tau\supn}O(k\supn).
\end{equation}
This distance has this sum form since it is bounded in a similar way to the distances in Lemma~\ref{lem:independence-finite-time} and \ref{lem:convergence-rate}.
\end{itemize}
To make the sum of the distances in \eqref{eq:independence-taun}--\eqref{EQ:PI-TILDE-VS-PI-HAT-TAUN} as small as possible, we should choose $\tau\supn$ such that these distances are equal, which leads to the choice
\begin{equation*}
\tau\supn=O\left(\frac{n^{1/2}}{k\supn}\right)
\end{equation*}
and a total distance of
\begin{equation*}
O\left(\left(\frac{k\supn}{n^{1/4}}\right)^2\right).
\end{equation*}
Therefore, for this distance to converge to zero, we need the condition that $k\supn=o(n^{1/4})$.

We acknowledge that $k\supn = o(n^{1/4})$ is not the optimal threshold for asymptotic independence to hold. In fact, one can improve the result to $k\supn= o\Bigl(\frac{n^{1/3}}{\log^m n}\Bigr)$ for some $m$ using more delicate bounding techniques in Lemma~\ref{lem:convergence-rate}. 
However, our main contribution is the generalization from the asymptotic independence of a \emph{constant} number of queues to that of a \emph{growing} number of queues.  So we choose to present the case $k\supn = o(n^{1/4})$ to not obscure the main idea with technical details.

As an interesting complement to the asymptotic independence result, we also show that when $k\supn=\Theta(n)$ and the service times are exponentially distributed, any number of queues are \emph{not} asymptotically independent. The formal statement and its proof are given in Theorem~\ref{thm:non-independence} in Appendix~\ref{app:non-independence}.  It then remains an open problem whether there exists a critical value for $k\supn$, where smaller $k\supn$ yields asymptotic independence and larger $k\supn$ does not.


\section{Non-Asymptotic Regime: Proof of Independence Upper Bound}\label{sec:independence-upper}
\begin{namedthm*}{Theorem~\ref{thm:independence-upper} (Restated)}
Consider an $n$-server system in the limited fork-join model with $k\supn\le n$. Then the steady-state job delay, $T\supn$, is stochastically upper bounded by the job delay given by independent task delays as defined in \eqref{eq:That}, $\hat{T}\supn$, i.e.,
\begin{equation}\tag*{(\ref{eq:ind-upper}) (Restated)}
T\supn\le_{st} \hat{T}\supn,
\end{equation}
where ``$\le_{st}$'' denotes stochastic dominance. Specifically, for any $\tau\ge 0$,
\begin{align}
\Pr\bigl(T\supn > \tau\bigr)&\le\Pr\bigl(\hat{T}\supn > \tau\bigr) = 1-\left(F(\tau)\right)^{k\supn}.\tag*{(\ref{eq:ind-upper-tail}) (Restated)}
\end{align}
\end{namedthm*}

The main tool we will use is the theory of associated random variables. For convenience of reference, we give the formal definition of association and some properties that we will use in Appendix~\ref{app:association}.  We refer interested readers to \cite{EsaProWal_67} for further details. Intuitively, association is a form of positive correlation among random variables. If a set of random variables are associated, then the maximum of them is stochastically upper bounded by the maximum of independent versions of them. Since the delay of a job is the maximum of its task delays, to show Theorem~\ref{thm:independence-upper}, it thus suffices to show association of the task delays. This further boils down to showing association among the workloads of any $k\supn$ queues in steady state since each task delay is the workload of the queue that the task is sent to plus the service time of the task.

Such an association result has been proven for the classical fork-join model \cite{NelTan_88} where $k\supn = n$, but the approach of the proof breaks down once we have $k\supn < n$.  The proof idea there is to observe the system at each job arrival time, and show that the numbers of tasks sent to different queues are associated.  This proof idea is widely used in the literature to establish association (see, e.g., \cite{KumSho_93,ShaBouBac_17}), but it does not work in the limited fork-join model when $k\supn<n$.  We can think of the process of assigning a job's tasks to queues as a balls-and-bins problem, where the $k\supn$ tasks correspond to $k\supn$ balls, the queues are the bins, and the number of balls thrown in each bin is the number of tasks sent to each queue.  When $k\supn = n$, it is obvious that the numbers of balls in the bins are associated since they are all exactly equal to one.  But when $k\supn < n$, the numbers of balls in the bins are actually \emph{negatively associated} by a classical result \cite{JoaPro_83}! However, one should not be discouraged since this does not mean that the \emph{steady-state} workloads are negatively associated.

In our proof, we develop a novel technique that we call ``Poisson oversampling'', where we observe the system not only when jobs arrive but also at the jump times of a Poisson process that is independent of everything else.  In the existing approach where the system is observed only at job arrival times, there is always one job arrival at each observation time.  But with oversampling, there could be one or zero job arrivals at each observation time.  Recall that jobs arrive with rate $\Lambda\supn$.  Let the additional Poisson process have rate $\beta\supn$.  Then in the corresponding balls-and-bins problem, with probability $\Lambda\supn/(\Lambda\supn+\beta\supn)$, $k\supn$ balls are thrown into $k\supn$ distinct bins chosen uniformly at random, and with probability $\beta\supn/(\Lambda\supn+\beta\supn)$, there are no balls at all.  We will see in the proof that, surprisingly, now the numbers of balls thrown into any $k\supn$ bins become associated with properly chosen $\beta\supn$.  This enables us to show association of steady-state workloads.

\begin{remark*}
Before we present the proof, we remark that it may be possible to explore the monotonicity of the workload process to establish association \cite{Har_77,Cox_84,Lig_05}.\footnote{We thank Prof.\ XYZ for suggesting this possible approach.}  However, the results in \cite{Har_77,Cox_84,Lig_05} assume either a finite or a compact state space.  It may be possible to generalize the results from a finite state space \cite{Har_77,Cox_84} to a countable state space for some Markov chains, which then can be applied to our problem for certain phase-type service time distributions.  To further deal with more general service time distributions, we may be able to utilize the existing results for a compact state space \cite{Lig_05}.  But there we need to compactify the state space and verify a condition on the generator of the workload process.  We do not pursue such an approach here.
\end{remark*}

\begin{proof}
\begin{sloppypar}
Recall that
\begin{equation*}
T\supn=\max\bigl\{T\supn_1,T\supn_2,\dots,T\supn_{k\supn}\bigr\},
\end{equation*}
where $T\supn_i$ denotes the steady-state task delay at queue $i$.  Then to prove the stochastic dominance, it suffices to prove that $T\supn_1$, $T\supn_2,\dots,T\supn_{k\supn}$ are associated \cite[Theorem~5.1]{EsaProWal_67}.
\end{sloppypar}

\begin{sloppypar}
We start by noting that it is sufficient to prove that the steady-state workloads, $W\supn_1(\infty),\allowbreak W\supn_2(\infty),\dots,W\supn_{k\supn}(\infty)$, are associated.  The sufficiency follows from the fact that each $T\supn_i$ can be expressed in the following form:
\begin{equation*}
T\supn_i=W\supn_i(\infty)+X_i,
\end{equation*}
where $X_1,\allowbreak X_2,\dots,X_{k\supn}$ represent the service times of tasks so they are i.i.d.$\sim G$ and independent of everything else.  Then $T\supn_1,\allowbreak T\supn_2,\dots,T\supn_{k\supn}$ are nondecreasing functions of $W\supn_1(\infty),\allowbreak W\supn_2(\infty),\dots,W\supn_{k\supn}(\infty),\allowbreak  X_1,\allowbreak X_2,\dots,X_{k\supn}$. So by Properties (P1) and (P2) in Lemma~\ref{lem:association}, $T\supn_1,\allowbreak T\supn_2,\dots,T\supn_{k\supn}$ are associated when $W\supn_1(\infty),\allowbreak W\supn_2(\infty),\dots,W\supn_{k\supn}(\infty)$ are associated. 
\end{sloppypar}

\begin{sloppypar}
All that remains is prove the claim that $W\supn_1(\infty),\allowbreak W\supn_2(\infty),\dots,\allowbreak W\supn_{k\supn}(\infty)$ are associated.  We will work with a discrete-time Markov chain constructed from the continuous-time workload process $(\bm{W}\supn(t),t\ge 0)$.  Specifically, we consider a Poisson process, denoted by $(B(t),t\ge 0)$, that is independent of everything else.  Let the rate of this Poisson process be $\beta\supn$, which will be specified later in \eqref{eq:beta}.  Then we sample the workload process $(\bm{W}\supn(t),t\ge 0)$ at time instances right before either a job arrival or an event of the Poisson process $(B(t),t\ge 0)$.  Let such time instances be denoted by $\{U_s,s=0,1,\dots\}$ with $U_0=0$.  This gives us a discrete-time Markov chain, which we denote by $(\bm{\Phi}\supn(s),s=0,1,\dots)$, i.e., $\Phi\supn_i(s)=W\supn_i(U_s^-)$, where $W\supn_i(U_s^-)$ is the workload of queue $i$ right before time $U_s$.  Since $(\bm{\Phi}\supn(s),s=0,1,\dots)$ is constructed by sampling the workload process more often than the job arrival process, we call this technique ``Poisson oversampling''.

We first claim that $(\bm{\Phi}\supn(s),s=0,1,\dots)$ converges to a well-defined steady state $\bm{\Phi}\supn(\infty)$ and that $\bm{\Phi}\supn(\infty)$ and $\bm{W}\supn(\infty)$ are identically distributed.  This claim can be proven by showing that $(\bm{\Phi}\supn(s),s=0,1,\dots)$ is aperiodic and positive Harris recurrent and then appealing to the PASTA property \cite{MelWhi_90}.  We omit the proof of this claim since the aperiodicity is straightforward to check and the positive Harris recurrence follows from the rather standard Foster-Lyapunov criteria using the quadratic Lyapunov function \cite{MeyTwe_92}.  With this claim, it then suffices to prove that $\Phi\supn_1(\infty),\allowbreak \Phi\supn_2(\infty),\dots,\allowbreak\Phi\supn_{k\supn}(\infty)$ are associated.

We assume that $\Phi\supn_i(0)=0$ for every $i=1,2,\dots,n$.  We will prove that $\Phi\supn_1(s),\Phi\supn_2(s),\dots,\Phi\supn_{k\supn}(s)$ are associated for any $s\ge 0$ by induction on $s$. Then $\Phi\supn_1(\infty),\allowbreak\Phi\supn_2(\infty),\dots,\allowbreak\Phi\supn_{k\supn}(\infty)$ are associated since $\bm{\Phi}\supn(s)\tod \bm{\Phi}\supn(\infty)$ as $s\to\infty$ \cite{EsaProWal_67}.
\end{sloppypar}

\emph{Base Step:} $\Phi\supn_1(0),\Phi\supn_2(0),\dots,\Phi\supn_{k\supn}(0)$ are associated since they are all zero.
\begin{figure*}
\centering
\includegraphics[width=0.3\textwidth]{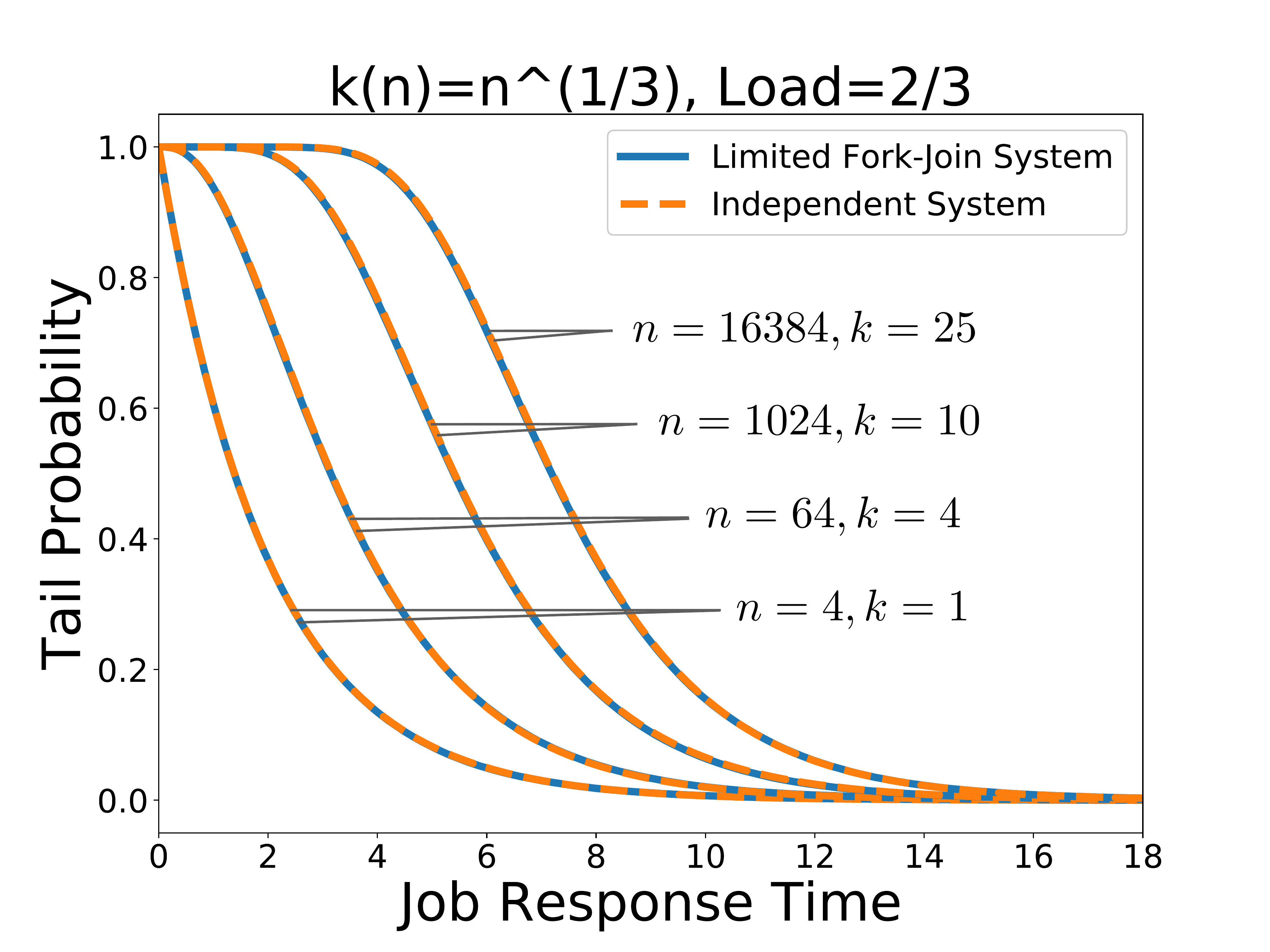}
\quad
\includegraphics[width=0.3\textwidth]{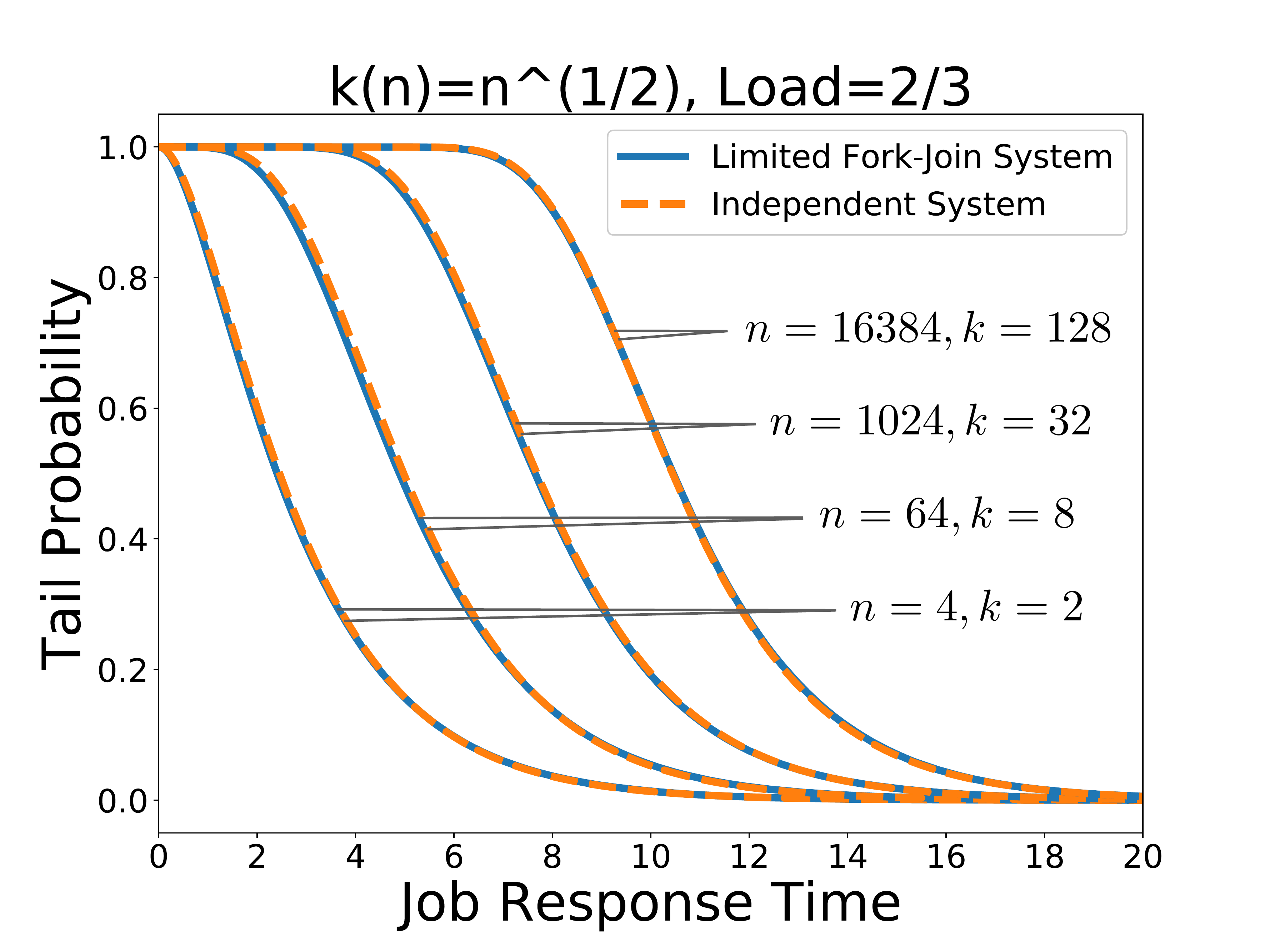}
\\
\includegraphics[width=0.3\textwidth]{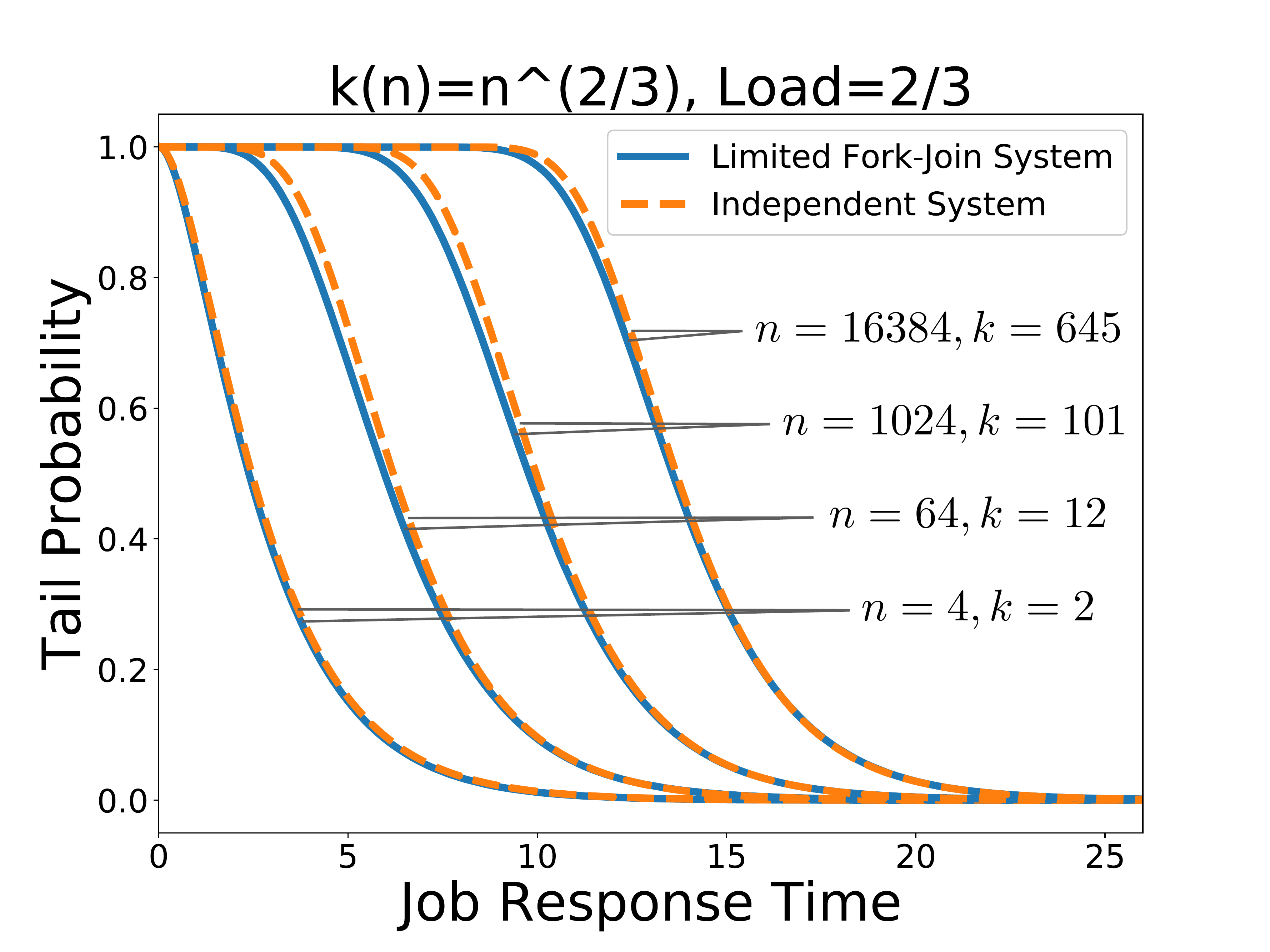}
\quad
\includegraphics[width=0.3\textwidth]{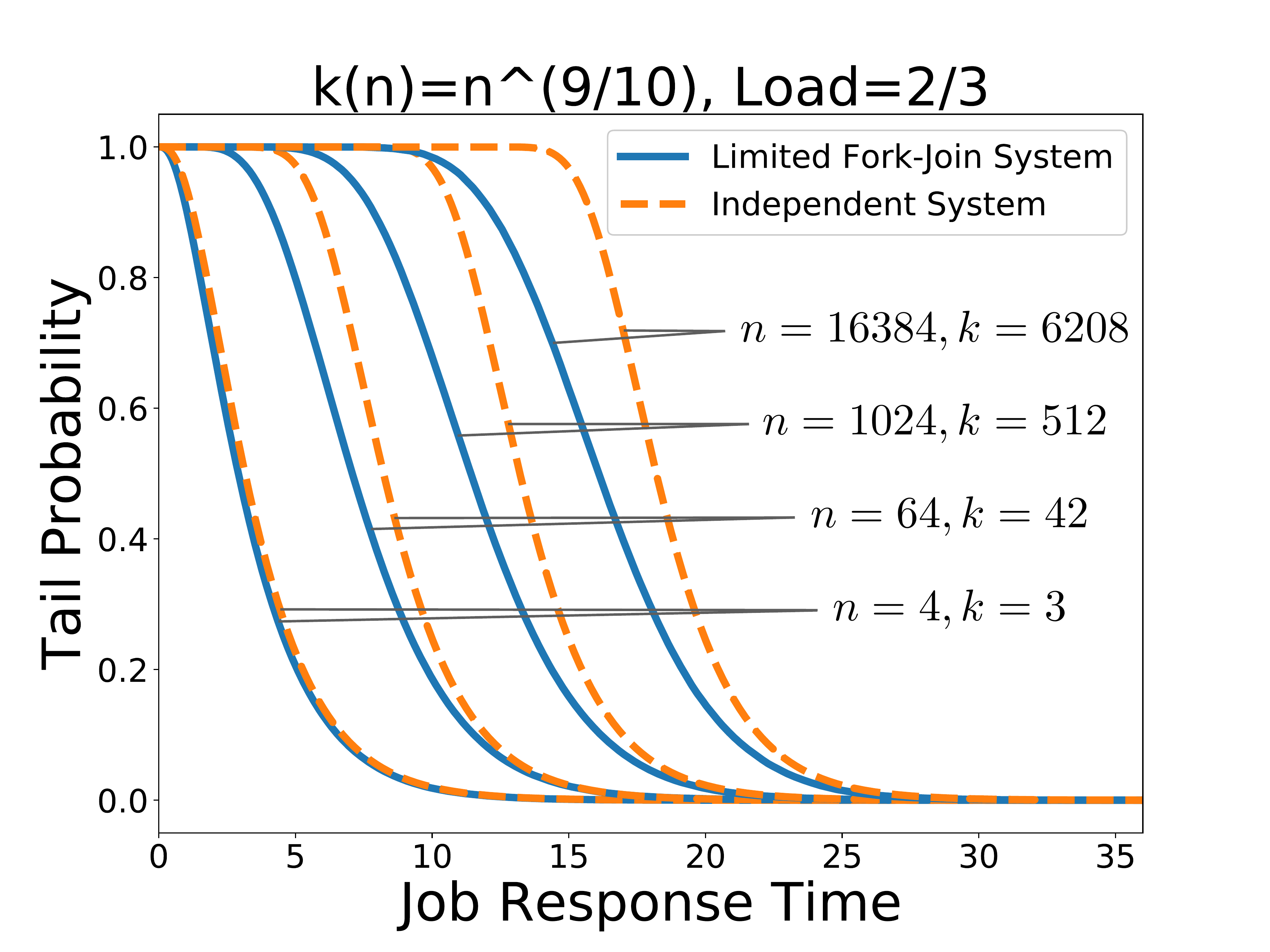}
\caption{Tail distributions of job delays in the limited fork-join systems and the independence upper bounds.}
\label{fig:ccdf}
\end{figure*}

\begin{sloppypar}
\emph{Inductive Step:}
Assuming that $\Phi\supn_1(s),\Phi\supn_2(s),\dots,\Phi\supn_{k\supn}(s)$ are associated for some $s\ge 0$, we will show that $\Phi\supn_1(s+1),\allowbreak\Phi\supn_2(s+1),\dots,\allowbreak\Phi\supn_{k\supn}(s+1)$ are associated. By Lindley equation,
\begin{equation}\label{eq:lindley}
\Phi\supn_i(s+1)=\Bigl(\Phi\supn_i(s)+ Y_i(s) - \Delta U(s)\Bigr)^+,
\end{equation}
where $Y_i(s)$ is the service time needed by the task that arrives to queue $i$ at time $U_s$, and $\Delta U(s)=U_{s+1}-U_s$.  Note that at time $U_s$, there may be no task arrival to queue $i$, either because there is no job arrival or because there is a job arrival but it does not send any tasks to queue $i$. So we can write $Y_i(s)$ as
\begin{equation*}
Y_i(s)=A_i(s)\cdot X_i(s),
\end{equation*}
where $A_i(s)$ equals to either $1$ or $0$, representing the number of task arrivals to queue $i$ at time $U_{s}$, and $X_i(s)$ is a r.v.\ with distribution $G$ and is independent of everything else, representing the service time.  Then $\Phi\supn_i(s+1),i=1,2,\dots,k\supn$ are nondecreasing functions of the $\Phi\supn_i(s)$'s, $A_i(s)$'s, $X_i(s)$'s and $-\Delta U(s)$ with $i=1,2,\dots,k\supn$.  We can see that each of the following four sets of r.v.'s, $\bigl\{\Phi\supn_1(s),\allowbreak \Phi\supn_2(s),\dots,\allowbreak \Phi\supn_{k\supn}(s)\bigr\}$, $\{A_1(s),\allowbreak A_2(s),\dots,\allowbreak A_{k\supn}(s)\}$, $\{X_1(s),\allowbreak X_2(s),\dots,\allowbreak X_{k\supn}(s)\}$, and $\{-\Delta U(s)\}$, is independent of the union of others.  So to show that $\Phi\supn_i(s+1),i=1,2,\dots,k\supn$ are associated, it suffices to show that each of these sets is a set of associated r.v.'s.
\end{sloppypar}

(i) The $\Phi\supn_i(s),i=1,2,\dots,k\supn$ are associated by assumption.

(ii) The $X_i(s),i=1,2,\dots,k\supn$ are associated since they are independent.

(iii) The r.v.\ $-\Delta U(s)$ is associated since a single r.v.\ is associated.

(iv) We now prove that $A_i(s),i=1,2,\dots,k\supn$ are associated. We note that here they do not satisfy the lattice condition in the celebrated FKG inequality \cite{ForKasGin_71}. For conciseness of notation, let $\bm{A}=(A_1(s),A_2(s),\dots,A_{k\supn}(s))$. To show association, it suffices to prove that for all \emph{binary-valued}, \emph{(entrywisely) nondecreasing} functions $f$ and $g$ \cite{EsaProWal_67},
\begin{equation}\label{eq:association}
\expect[f(\bm{A})g(\bm{A})]\ge \expect[f(\bm{A})]\expect[g(\bm{A})].
\end{equation}
By construction, it is clear that $\bm{A}\in\{0,1\}^{k\supn}$.  If either $f$ or $g$ always has constant value $0$ or $1$, then \eqref{eq:association} trivially holds.  So we can focus on the case that neither $f$ nor $g$ is a constant function.  In this case, by the monotonicity of $f$ and $g$, we have $f\bigl((0,\dots,0)\bigr)=g\bigl((0,\dots,0)\bigr)=0$ and $f\bigl((1,\dots,1)\bigr)=g\bigl((1,\dots,1)\bigr)=1$. Note that at each sample time $U_s$, the probability that there is a job arrival is $\Lambda\supn/(\Lambda\supn+\beta\supn)$. Then
\begin{align*}
\expect[f(\bm{A})g(\bm{A})]&\ge \Pr\bigl(\bm{A}=(1,\dots,1)\bigr)\cdot f\bigl((1,\dots,1)\bigr)g\bigl((1,\dots,1)\bigr)\\
&=\frac{\Lambda\supn}{\Lambda\supn+\beta\supn} \frac{1}{\binom{n}{k\supn}}.
\end{align*}
Since $f(\bm{a})\le 1,g(\bm{a})\le 1$ for any $\bm{a}\in\{0,1\}^{k\supn}$,
\begin{align*}
\expect[f(\bm{A})]&= \sum_{\substack{\bm{a}\in\{0,1\}^{k\supn}:\\\bm{a}\neq (0,\dots,0)}}\Pr(\bm{A}=\bm{a})f(\bm{a})\\
&\le \Pr(\bm{A}\neq (0,\dots,0))\\
&=\frac{\Lambda\supn}{\Lambda\supn+\beta\supn}p,
\end{align*}
where $p$ is the probability that a job arrival does not sent tasks to queues $1,2,\dots,k\supn$, so $p$ does not depend on $\beta\supn$ and
\begin{equation*}
p=
\begin{cases}
1 & \text{ if }k\supn>n/2,\\
\frac{1}{\binom{n}{k\supn}}\left(\binom{n}{k\supn}-\binom{n-k\supn}{k\supn}\right) & \text{ if }k\supn \le n/2.
\end{cases}
\end{equation*}
Similarly,
\begin{align*}
\expect[g(\bm{A})]
&\le\frac{\Lambda\supn}{\Lambda\supn+\beta\supn}p.
\end{align*}
We choose any $\beta\supn$ such that
\begin{equation*}
\frac{\Lambda\supn}{\Lambda\supn+\beta\supn} \frac{1}{\binom{n}{k\supn}}\ge \left(\frac{\Lambda\supn}{\Lambda\supn+\beta\supn}\right)^2p^2,
\end{equation*}
i.e., any $\beta\supn$ such that
\begin{align}
\beta\supn\ge\Lambda\supn\left(\binom{n}{k\supn}p^2-1\right).\label{eq:beta}
\end{align}
Then
\begin{equation*}
\expect[f(\bm{A})g(\bm{A})]\ge \expect[f(\bm{A})]\expect[g(\bm{A})],
\end{equation*}
which completes the induction, and thus completes the proof.
\end{proof}

\section{Evaluation via Simulations}\label{sec:simulations}
In this section we use simulation to explore the regimes of $k\supn$ that are not covered by our theoretical analysis.  Specifically, our theoretical analysis has established that when $k\supn=o(n^{1/4})$, any $k\supn$ queues are asymptotically independent and the job delay converges to the independence upper bound; when $k\supn=\Theta(n)$, any number of multiple queues are bounded away from being independent.  We therefore simulate the limited fork-join systems for the following four settings between $o(n^{1/4})$ and $\Theta(n)$: $k\supn=n^{1/3}$, $k\supn=n^{1/2}$, $k\supn=n^{2/3}$ and $k\supn=n^{9/10}$.  We simulate the $n$-server system for $n=4, 64, 1024$ and $16384$ under each setting.

We compare the tail distribution (complementary cumulative distribution function) of the job delay in each limited fork-join system with the independence upper bound. Figure~\ref{fig:ccdf} shows the results for systems with exponentially distributed service times and load $\rho=2/3$ on each individual queue.  We see that for $k\supn=n^{1/3}$, the independence upper bound is strikingly accurate.  For $k\supn=n^{1/2}$, the gap between the job delay and the independence upper bound seems to be diminishing when $n$ is large enough.  But for $k\supn=n^{2/3}$, it is rather unclear if the job delay will converge to the independence upper bound or not.  Finally, when $k\supn=n^{9/10}$, the job delay evidently \emph{diverges} from the independence upper bound.  We have also simulated systems for different loads ($\rho = 1/3, 0.9$) and different service time distributions (deterministic, truncated Pareto, hyperexponential), and similar phenomena are observed.

\section{Conclusions}\label{sec:conclusions}
We study the limited fork-join model where there are $n$ servers in the system and each job consists of $k\supn\le n$ tasks that are sent to $k\supn$ distinct servers chosen uniformly at random.  A job is considered complete only when all its tasks complete processing.  We characterize the delay of jobs both in an asymptotic regime where $n\to\infty$ and in the non-asymptotic regime for any $n$ and any $k\supn=k$.

For the asymptotic regime, we show that under the condition $k\supn=o(n^{1/4})$, the workloads of any $k\supn$ queues in the $n$-server system are asymptotically independent, and the delay of a job therefore converges to the maximum of independent task delays.  For the non-asymptotic regime, we show that the steady-state workloads of any $k\supn$ queues are associated, and therefore assuming independent task delays yields an upper bound on the job delay.  Our results provide the first tight characterization of job delay in the limited fork-join model, and the upper bound is tighter than other existing upper bounds.

From a technical perspective, we make the following two contributions: (1) Our asymptotic results open up new regimes for asymptotic independence: $k\supn$ queues are shown to be asymptotically independent, where $k\supn$ is allowed to grow with $n$ instead of being a constant, as was previously studied. (2) We develop new proof techniques to establish association in steady state. We believe that the results and techniques in this paper will shed light on related problems such as order statistics in coded data storage systems, job redundancy, load-balancing algorithms.

\section{Acknowledgment}
\begin{sloppypar}
This work was supported in part by NSF Grants CPS ECCS-1739189, ECCS 1609370, XPS-1629444, and CMMI-1538204, the U.S.\ Army Research Office (ARO Grant No.\ W911NF-16-1-0259), the U.S.\ Office of Naval Research (ONR Grant No.\ N00014-15-1-2169), DTRA under the grant number HDTRA1-16-0017, and a 2018 Faculty Award from Microsoft. Additionally, Haotian Jiang was supported in part by the Department of Physics at Tsinghua University.
\end{sloppypar}



\begin{thebibliography}{00}


\ifx \showCODEN    \undefined \def \showCODEN     #1{\unskip}     \fi
\ifx \showDOI      \undefined \def \showDOI       #1{#1}\fi
\ifx \showISBNx    \undefined \def \showISBNx     #1{\unskip}     \fi
\ifx \showISBNxiii \undefined \def \showISBNxiii  #1{\unskip}     \fi
\ifx \showISSN     \undefined \def \showISSN      #1{\unskip}     \fi
\ifx \showLCCN     \undefined \def \showLCCN      #1{\unskip}     \fi
\ifx \shownote     \undefined \def \shownote      #1{#1}          \fi
\ifx \showarticletitle \undefined \def \showarticletitle #1{#1}   \fi
\ifx \showURL      \undefined \def \showURL       {\relax}        \fi
\providecommand\bibfield[2]{#2}
\providecommand\bibinfo[2]{#2}
\providecommand\natexlab[1]{#1}
\providecommand\showeprint[2][]{arXiv:#2}

\bibitem[\protect\citeauthoryear{Baccelli}{Baccelli}{1985}]%
        {Bac_85}
\bibfield{author}{\bibinfo{person}{Fran{\c c}ois Baccelli}.}
  \bibinfo{year}{1985}\natexlab{}.
\newblock \bibinfo{booktitle}{{\em Two parallel queues created by arrivals with
  two demands:the {M/G/2} symmetrical case}}.
\newblock \bibinfo{type}{{T}echnical {R}eport} RR-0426.
  \bibinfo{institution}{{INRIA}}.
\newblock


\bibitem[\protect\citeauthoryear{Baccelli, Makowski, and Shwartz}{Baccelli
  et~al\mbox{.}}{1989}]%
        {BacMakShw_89}
\bibfield{author}{\bibinfo{person}{Fran{\c c}ois Baccelli},
  \bibinfo{person}{Armand~M. Makowski}, {and} \bibinfo{person}{Adam Shwartz}.}
  \bibinfo{year}{1989}\natexlab{}.
\newblock \showarticletitle{The fork-join queue and related systems with
  synchronization constraints: stochastic ordering and computable bounds}.
\newblock \bibinfo{journal}{{\em Adv. Appl. Probab.\/}}  \bibinfo{volume}{21}
  (\bibinfo{year}{1989}), \bibinfo{pages}{629--660}.
\newblock


\bibitem[\protect\citeauthoryear{Bramson, Lu, and Prabhakar}{Bramson
  et~al\mbox{.}}{2012}]%
        {BraLuPra_12}
\bibfield{author}{\bibinfo{person}{Maury Bramson}, \bibinfo{person}{Yi Lu},
  {and} \bibinfo{person}{Balaji Prabhakar}.} \bibinfo{year}{2012}\natexlab{}.
\newblock \showarticletitle{Asymptotic independence of queues under randomized
  load balancing}.
\newblock \bibinfo{journal}{{\em Queueing Syst.\/}} \bibinfo{volume}{71},
  \bibinfo{number}{3} (\bibinfo{date}{01 July} \bibinfo{year}{2012}),
  \bibinfo{pages}{247--292}.
\newblock


\bibitem[\protect\citeauthoryear{Chen, Alspaugh, and Katz}{Chen
  et~al\mbox{.}}{2012}]%
        {CheAlsKat_12}
\bibfield{author}{\bibinfo{person}{Yanpei Chen}, \bibinfo{person}{Sara
  Alspaugh}, {and} \bibinfo{person}{Randy Katz}.}
  \bibinfo{year}{2012}\natexlab{}.
\newblock \showarticletitle{Interactive Analytical Processing in Big Data
  Systems: A Cross-industry Study of {MapReduce} Workloads}.
\newblock \bibinfo{journal}{{\em Proc. VLDB Endow.\/}} \bibinfo{volume}{5},
  \bibinfo{number}{12} (\bibinfo{date}{Aug.} \bibinfo{year}{2012}),
  \bibinfo{pages}{1802--1813}.
\newblock
\showISSN{2150-8097}


\bibitem[\protect\citeauthoryear{Cox}{Cox}{1984}]%
        {Cox_84}
\bibfield{author}{\bibinfo{person}{J.~Theodore Cox}.}
  \bibinfo{year}{1984}\natexlab{}.
\newblock \showarticletitle{An Alternate Proof of a Correlation Inequality of
  {Harris}}.
\newblock \bibinfo{journal}{{\em Ann. Probab.\/}} \bibinfo{volume}{12},
  \bibinfo{number}{1} (\bibinfo{date}{02} \bibinfo{year}{1984}),
  \bibinfo{pages}{272--273}.
\newblock


\bibitem[\protect\citeauthoryear{DasGupta}{DasGupta}{2008}]%
        {Das_08}
\bibfield{author}{\bibinfo{person}{Anirban DasGupta}.}
  \bibinfo{year}{2008}\natexlab{}.
\newblock \bibinfo{booktitle}{{\em Asymptotic theory of statistics and
  probability}}.
\newblock \bibinfo{publisher}{Springer Science \& Business Media}.
\newblock


\bibitem[\protect\citeauthoryear{Dean and Ghemawat}{Dean and Ghemawat}{2004}]%
        {DeaGhe_04}
\bibfield{author}{\bibinfo{person}{Jeffrey Dean} {and} \bibinfo{person}{Sanjay
  Ghemawat}.} \bibinfo{year}{2004}\natexlab{}.
\newblock \showarticletitle{{MapReduce}: Simplified Data Processing on Large
  Clusters}. In \bibinfo{booktitle}{{\em Proc. USENIX Conf. Operating Systems
  Design and Implementation (OSDI)}}. \bibinfo{address}{San Francisco, CA},
  \bibinfo{pages}{10--10}.
\newblock


\bibitem[\protect\citeauthoryear{Esary, Proschan, and Walkup}{Esary
  et~al\mbox{.}}{1967}]%
        {EsaProWal_67}
\bibfield{author}{\bibinfo{person}{J.~D. Esary}, \bibinfo{person}{F. Proschan},
  {and} \bibinfo{person}{D.~W. Walkup}.} \bibinfo{year}{1967}\natexlab{}.
\newblock \showarticletitle{Association of Random Variables, with
  Applications}.
\newblock \bibinfo{journal}{{\em Ann. Math. Statist.\/}} \bibinfo{volume}{38},
  \bibinfo{number}{5} (\bibinfo{date}{10} \bibinfo{year}{1967}),
  \bibinfo{pages}{1466--1474}.
\newblock


\bibitem[\protect\citeauthoryear{Flatto and Hahn}{Flatto and Hahn}{1984}]%
        {FlaHah_84}
\bibfield{author}{\bibinfo{person}{L. Flatto} {and} \bibinfo{person}{S. Hahn}.}
  \bibinfo{year}{1984}\natexlab{}.
\newblock \showarticletitle{Two parallel queues created by arrivals with two
  demands {I}}.
\newblock \bibinfo{journal}{{\em SIAM J. Appl. Math.\/}} \bibinfo{volume}{44},
  \bibinfo{number}{5} (\bibinfo{year}{1984}), \bibinfo{pages}{1041--1053}.
\newblock


\bibitem[\protect\citeauthoryear{Fortuin, Kasteleyn, and Ginibre}{Fortuin
  et~al\mbox{.}}{1971}]%
        {ForKasGin_71}
\bibfield{author}{\bibinfo{person}{C.~M. Fortuin}, \bibinfo{person}{P.~W.
  Kasteleyn}, {and} \bibinfo{person}{J. Ginibre}.}
  \bibinfo{year}{1971}\natexlab{}.
\newblock \showarticletitle{Correlation inequalities on some partially ordered
  sets}.
\newblock \bibinfo{journal}{{\em Comm. Math. Phys.\/}} \bibinfo{volume}{22},
  \bibinfo{number}{2} (\bibinfo{year}{1971}), \bibinfo{pages}{89--103}.
\newblock


\bibitem[\protect\citeauthoryear{Gardner, Harchol-Balter, and
  Scheller-Wolf}{Gardner et~al\mbox{.}}{2016}]%
        {GarHarSch_16}
\bibfield{author}{\bibinfo{person}{Kristen Gardner}, \bibinfo{person}{Mor
  Harchol-Balter}, {and} \bibinfo{person}{Alan Scheller-Wolf}.}
  \bibinfo{year}{2016}\natexlab{}.
\newblock \showarticletitle{A Better Model for Job Redundancy: Decoupling
  Server Slowdown and Job Size}. In \bibinfo{booktitle}{{\em IEEE Int. Symp.
  Modeling, Analysis and Simulation of Computer and Telecommunication Systems
  (MASCOTS)}}. \bibinfo{address}{London, United Kingdom},
  \bibinfo{pages}{1--10}.
\newblock


\bibitem[\protect\citeauthoryear{Gardner, Harchol-Balter, Scheller-Wolf, and
  Van~Houdt}{Gardner et~al\mbox{.}}{2017a}]%
        {GarHarSch_17_2}
\bibfield{author}{\bibinfo{person}{Kristen Gardner}, \bibinfo{person}{Mor
  Harchol-Balter}, \bibinfo{person}{Alan Scheller-Wolf}, {and}
  \bibinfo{person}{Benny Van~Houdt}.} \bibinfo{year}{2017}\natexlab{a}.
\newblock \showarticletitle{A Better Model for Job Redundancy: Decoupling
  Server Slowdown and Job Size}.
\newblock \bibinfo{journal}{{\em IEEE/ACM Trans. Netw.\/}}
  \bibinfo{volume}{25}, \bibinfo{number}{6} (\bibinfo{date}{Dec.}
  \bibinfo{year}{2017}), \bibinfo{pages}{3353--3367}.
\newblock


\bibitem[\protect\citeauthoryear{Gardner, Harchol-Balter, Scheller-Wolf,
  Velednitsky, and Zbarsky}{Gardner et~al\mbox{.}}{2017b}]%
        {GarHarSch_17}
\bibfield{author}{\bibinfo{person}{Kristen Gardner}, \bibinfo{person}{Mor
  Harchol-Balter}, \bibinfo{person}{Alan Scheller-Wolf}, \bibinfo{person}{Mark
  Velednitsky}, {and} \bibinfo{person}{Samuel Zbarsky}.}
  \bibinfo{year}{2017}\natexlab{b}.
\newblock \showarticletitle{Redundancy-d: The Power of d Choices for
  Redundancy}.
\newblock \bibinfo{journal}{{\em Oper. Res.\/}} \bibinfo{volume}{65},
  \bibinfo{number}{4} (\bibinfo{year}{2017}), \bibinfo{pages}{1078--1094}.
\newblock


\bibitem[\protect\citeauthoryear{Harchol-Balter}{Harchol-Balter}{2013}]%
        {Har_13}
\bibfield{author}{\bibinfo{person}{Mor Harchol-Balter}.}
  \bibinfo{year}{2013}\natexlab{}.
\newblock \bibinfo{booktitle}{{\em Performance Modeling and Design of Computer
  Systems: Queueing Theory in Action\/} (\bibinfo{edition}{1st} ed.)}.
\newblock \bibinfo{publisher}{Cambridge University Press},
  \bibinfo{address}{New York, NY}.
\newblock


\bibitem[\protect\citeauthoryear{Harris}{Harris}{1977}]%
        {Har_77}
\bibfield{author}{\bibinfo{person}{T.~E. Harris}.}
  \bibinfo{year}{1977}\natexlab{}.
\newblock \showarticletitle{A Correlation Inequality for {Markov} Processes in
  Partially Ordered State Spaces}.
\newblock \bibinfo{journal}{{\em Ann. Probab.\/}} \bibinfo{volume}{5},
  \bibinfo{number}{3} (\bibinfo{date}{06} \bibinfo{year}{1977}),
  \bibinfo{pages}{451--454}.
\newblock


\bibitem[\protect\citeauthoryear{Joag-Dev and Proschan}{Joag-Dev and
  Proschan}{1983}]%
        {JoaPro_83}
\bibfield{author}{\bibinfo{person}{Kumar Joag-Dev} {and} \bibinfo{person}{Frank
  Proschan}.} \bibinfo{year}{1983}\natexlab{}.
\newblock \showarticletitle{Negative Association of Random Variables with
  Applications}.
\newblock \bibinfo{journal}{{\em Ann. Statist.\/}} \bibinfo{volume}{11},
  \bibinfo{number}{1} (\bibinfo{date}{March} \bibinfo{year}{1983}),
  \bibinfo{pages}{286--295}.
\newblock


\bibitem[\protect\citeauthoryear{Joshi, Liu, and Soljanin}{Joshi
  et~al\mbox{.}}{2012}]%
        {JosLiuSol_12}
\bibfield{author}{\bibinfo{person}{Gauri Joshi}, \bibinfo{person}{Yanpei Liu},
  {and} \bibinfo{person}{Emina Soljanin}.} \bibinfo{year}{2012}\natexlab{}.
\newblock \showarticletitle{Coding for fast content download}. In
  \bibinfo{booktitle}{{\em Proc. Ann. Allerton Conf. Communication, Control and
  Computing}}. \bibinfo{address}{Monticello, IL}, \bibinfo{pages}{326--333}.
\newblock


\bibitem[\protect\citeauthoryear{Kumar and Shorey}{Kumar and Shorey}{1993}]%
        {KumSho_93}
\bibfield{author}{\bibinfo{person}{Anurag Kumar} {and} \bibinfo{person}{Rajeev
  Shorey}.} \bibinfo{year}{1993}\natexlab{}.
\newblock \showarticletitle{Performance analysis and scheduling of stochastic
  fork-join jobs in a multicomputer system}.
\newblock \bibinfo{journal}{{\em IEEE Trans. Parallel Distrib. Syst.\/}}
  \bibinfo{volume}{4}, \bibinfo{number}{10} (\bibinfo{date}{Oct.}
  \bibinfo{year}{1993}), \bibinfo{pages}{1147--1164}.
\newblock


\bibitem[\protect\citeauthoryear{Lee, Shah, Huang, and Ramchandran}{Lee
  et~al\mbox{.}}{2017}]%
        {LeeShaHua_17}
\bibfield{author}{\bibinfo{person}{Kangwook Lee}, \bibinfo{person}{Nihar~B.
  Shah}, \bibinfo{person}{Longbo Huang}, {and} \bibinfo{person}{Kannan
  Ramchandran}.} \bibinfo{year}{2017}\natexlab{}.
\newblock \showarticletitle{The {MDS} Queue: Analysing the Latency Performance
  of Erasure Codes}.
\newblock \bibinfo{journal}{{\em IEEE Trans. Inf. Theory\/}}
  \bibinfo{volume}{63}, \bibinfo{number}{5} (\bibinfo{date}{May}
  \bibinfo{year}{2017}), \bibinfo{pages}{2822--2842}.
\newblock


\bibitem[\protect\citeauthoryear{Li, Ramamoorthy, and Srikant}{Li
  et~al\mbox{.}}{2016}]%
        {LiRamSri_16}
\bibfield{author}{\bibinfo{person}{Bin Li}, \bibinfo{person}{Aditya
  Ramamoorthy}, {and} \bibinfo{person}{R. Srikant}.}
  \bibinfo{year}{2016}\natexlab{}.
\newblock \showarticletitle{Mean-field-analysis of coding versus replication in
  cloud storage systems}. In \bibinfo{booktitle}{{\em Proc. IEEE Int. Conf.
  Computer Communications (INFOCOM)}}. \bibinfo{address}{San Francisco, CA},
  \bibinfo{pages}{1--9}.
\newblock


\bibitem[\protect\citeauthoryear{Liggett}{Liggett}{2005}]%
        {Lig_05}
\bibfield{author}{\bibinfo{person}{Thomas~M. Liggett}.}
  \bibinfo{year}{2005}\natexlab{}.
\newblock \bibinfo{booktitle}{{\em Interacting Particle Systems}}.
\newblock \bibinfo{publisher}{Springer-Verlag Berlin Heidelberg}.
\newblock


\bibitem[\protect\citeauthoryear{Melamed and Whitt}{Melamed and Whitt}{1990}]%
        {MelWhi_90}
\bibfield{author}{\bibinfo{person}{Benjamin Melamed} {and}
  \bibinfo{person}{Ward Whitt}.} \bibinfo{year}{1990}\natexlab{}.
\newblock \showarticletitle{On Arrivals That See Time Averages}.
\newblock \bibinfo{journal}{{\em Oper. Res.\/}} \bibinfo{volume}{38},
  \bibinfo{number}{1} (\bibinfo{year}{1990}), \bibinfo{pages}{156--172}.
\newblock


\bibitem[\protect\citeauthoryear{Meyn and Tweedie}{Meyn and Tweedie}{1992}]%
        {MeyTwe_92}
\bibfield{author}{\bibinfo{person}{Sean~P. Meyn} {and} \bibinfo{person}{R.~L.
  Tweedie}.} \bibinfo{year}{1992}\natexlab{}.
\newblock \showarticletitle{Stability of {Markovian} Processes {I}: Criteria
  for Discrete-Time Chains}.
\newblock \bibinfo{journal}{{\em Adv. Appl. Probab.\/}} \bibinfo{volume}{24},
  \bibinfo{number}{3} (\bibinfo{year}{1992}), \bibinfo{pages}{542--574}.
\newblock


\bibitem[\protect\citeauthoryear{Meyn and Tweedie}{Meyn and Tweedie}{1993}]%
        {MeyTwe_93}
\bibfield{author}{\bibinfo{person}{Sean~P. Meyn} {and} \bibinfo{person}{R.~L.
  Tweedie}.} \bibinfo{year}{1993}\natexlab{}.
\newblock \showarticletitle{Stability of {Markovian} Processes {III}:
  {Foster-Lyapunov} Criteria for Continuous-Time Processes}.
\newblock \bibinfo{journal}{{\em Adv. Appl. Probab.\/}} \bibinfo{volume}{25},
  \bibinfo{number}{3} (\bibinfo{year}{1993}), \bibinfo{pages}{518--548}.
\newblock


\bibitem[\protect\citeauthoryear{Moseley, Dasgupta, Kumar, and
  Sarl\'{o}s}{Moseley et~al\mbox{.}}{2011}]%
        {MosDasKum_11}
\bibfield{author}{\bibinfo{person}{Benjamin Moseley}, \bibinfo{person}{Anirban
  Dasgupta}, \bibinfo{person}{Ravi Kumar}, {and} \bibinfo{person}{Tam\'{a}s
  Sarl\'{o}s}.} \bibinfo{year}{2011}\natexlab{}.
\newblock \showarticletitle{On Scheduling in {Map-Reduce} and Flow-shops}. In
  \bibinfo{booktitle}{{\em Proc. Ann. ACM Symp. Parallelism in Algorithms and
  Architectures (SPAA)}}. \bibinfo{address}{San Jose, CA},
  \bibinfo{pages}{289--298}.
\newblock


\bibitem[\protect\citeauthoryear{Nelson and Tantawi}{Nelson and
  Tantawi}{1988}]%
        {NelTan_88}
\bibfield{author}{\bibinfo{person}{Randolph Nelson} {and}
  \bibinfo{person}{Asser~N. Tantawi}.} \bibinfo{year}{1988}\natexlab{}.
\newblock \showarticletitle{Approximate Analysis of Fork/Join Synchronization
  in Parallel Queues}.
\newblock \bibinfo{journal}{{\em IEEE Trans. Comput.\/}} \bibinfo{volume}{37},
  \bibinfo{number}{6} (\bibinfo{date}{June} \bibinfo{year}{1988}),
  \bibinfo{pages}{739--743}.
\newblock


\bibitem[\protect\citeauthoryear{Nelson, Towsley, and Tantawi}{Nelson
  et~al\mbox{.}}{1988}]%
        {NelTowTan_88}
\bibfield{author}{\bibinfo{person}{Randolph Nelson}, \bibinfo{person}{Don
  Towsley}, {and} \bibinfo{person}{Asser~N. Tantawi}.}
  \bibinfo{year}{1988}\natexlab{}.
\newblock \showarticletitle{Performance Analysis of Parallel Processing
  Systems}.
\newblock \bibinfo{journal}{{\em IEEE Trans. Softw. Eng.\/}}
  \bibinfo{volume}{14}, \bibinfo{number}{4} (\bibinfo{date}{April}
  \bibinfo{year}{1988}), \bibinfo{pages}{532--540}.
\newblock


\bibitem[\protect\citeauthoryear{Rizk, Poloczek, and Ciucu}{Rizk
  et~al\mbox{.}}{2016}]%
        {RizPolCiu_16}
\bibfield{author}{\bibinfo{person}{Amr Rizk}, \bibinfo{person}{Felix Poloczek},
  {and} \bibinfo{person}{Florin Ciucu}.} \bibinfo{year}{2016}\natexlab{}.
\newblock \showarticletitle{Stochastic bounds in Fork--Join queueing systems
  under full and partial mapping}.
\newblock \bibinfo{journal}{{\em Queueing Syst.\/}} \bibinfo{volume}{83},
  \bibinfo{number}{3} (\bibinfo{date}{Aug.} \bibinfo{year}{2016}),
  \bibinfo{pages}{261--291}.
\newblock


\bibitem[\protect\citeauthoryear{Royden and Fitzpatrick}{Royden and
  Fitzpatrick}{2010}]%
        {RoyFit_10}
\bibfield{author}{\bibinfo{person}{Halsey~L. Royden} {and}
  \bibinfo{person}{Patrick~M. Fitzpatrick}.} \bibinfo{year}{2010}\natexlab{}.
\newblock \bibinfo{booktitle}{{\em Real Analysis\/} (\bibinfo{edition}{4th}
  ed.)}.
\newblock \bibinfo{publisher}{Pearson}.
\newblock


\bibitem[\protect\citeauthoryear{Shah, Lee, and Ramchandran}{Shah
  et~al\mbox{.}}{2013}]%
        {ShaLeeRam_13}
\bibfield{author}{\bibinfo{person}{Nihar~B. Shah}, \bibinfo{person}{Kangwook
  Lee}, {and} \bibinfo{person}{Kannan Ramchandran}.}
  \bibinfo{year}{2013}\natexlab{}.
\newblock \showarticletitle{When do redundant requests reduce latency ?}. In
  \bibinfo{booktitle}{{\em Proc. Ann. Allerton Conf. Communication, Control and
  Computing}}. \bibinfo{address}{Monticello, IL}, \bibinfo{pages}{731--738}.
\newblock


\bibitem[\protect\citeauthoryear{Shah, Bouillard, and Baccelli}{Shah
  et~al\mbox{.}}{2017}]%
        {ShaBouBac_17}
\bibfield{author}{\bibinfo{person}{Virag Shah}, \bibinfo{person}{Anne
  Bouillard}, {and} \bibinfo{person}{Fran{\c c}ois Baccelli}.}
  \bibinfo{year}{2017}\natexlab{}.
\newblock \showarticletitle{Delay comparison of delivery and coding policies in
  data clusters}. In \bibinfo{booktitle}{{\em Proc. Ann. Allerton Conf.
  Communication, Control and Computing}}. \bibinfo{address}{Monticello, IL},
  \bibinfo{pages}{397--404}.
\newblock


\bibitem[\protect\citeauthoryear{Sun, Koksal, and Shroff}{Sun
  et~al\mbox{.}}{2017}]%
        {SunKokShr_17}
\bibfield{author}{\bibinfo{person}{Yin Sun}, \bibinfo{person}{C.~Emre Koksal},
  {and} \bibinfo{person}{Ness~B. Shroff}.} \bibinfo{year}{2017}\natexlab{}.
\newblock \bibinfo{booktitle}{{\em Near Delay-Optimal Scheduling of Batch Jobs
  in Multi-Server Systems}}.
\newblock \bibinfo{type}{{T}echnical {R}eport}. \bibinfo{institution}{The Ohio
  State University}.
\newblock


\bibitem[\protect\citeauthoryear{Tan, Meng, and Zhang}{Tan
  et~al\mbox{.}}{2012}]%
        {TanMenZha_12}
\bibfield{author}{\bibinfo{person}{Jian Tan}, \bibinfo{person}{Xiaoqiao Meng},
  {and} \bibinfo{person}{Li Zhang}.} \bibinfo{year}{2012}\natexlab{}.
\newblock \showarticletitle{Delay Tails in {MapReduce} Scheduling}. In
  \bibinfo{booktitle}{{\em Proc. ACM SIGMETRICS/PERFORMANCE Jt. Int. Conf.
  Measurement and Modeling of Computer Systems}}. \bibinfo{address}{London,
  United Kingdom}, \bibinfo{pages}{5--16}.
\newblock


\bibitem[\protect\citeauthoryear{Thomasian}{Thomasian}{2014}]%
        {Tho_14}
\bibfield{author}{\bibinfo{person}{Alexander Thomasian}.}
  \bibinfo{year}{2014}\natexlab{}.
\newblock \showarticletitle{Analysis of Fork/Join and Related Queueing
  Systems}.
\newblock \bibinfo{journal}{{\em ACM Comput. Surv.\/}} \bibinfo{volume}{47},
  \bibinfo{number}{2}, Article \bibinfo{articleno}{17} (\bibinfo{date}{Aug.}
  \bibinfo{year}{2014}), \bibinfo{numpages}{71}~pages.
\newblock


\bibitem[\protect\citeauthoryear{Vulimiri, Michel, Godfrey, and
  Shenker}{Vulimiri et~al\mbox{.}}{2012}]%
        {VulMicGod_12}
\bibfield{author}{\bibinfo{person}{Ashish Vulimiri}, \bibinfo{person}{Oliver
  Michel}, \bibinfo{person}{P.~Brighten Godfrey}, {and} \bibinfo{person}{Scott
  Shenker}.} \bibinfo{year}{2012}\natexlab{}.
\newblock \showarticletitle{More is Less: Reducing Latency via Redundancy}. In
  \bibinfo{booktitle}{{\em Proc. ACM Workshop Hot Topics in Networks
  (HotNets)}}. \bibinfo{address}{Redmond, WA}, \bibinfo{pages}{13--18}.
\newblock


\bibitem[\protect\citeauthoryear{Wang, Zhu, Ying, Tan, and Zhang}{Wang
  et~al\mbox{.}}{2016}]%
        {WanZhuYin_16}
\bibfield{author}{\bibinfo{person}{Weina Wang}, \bibinfo{person}{Kai Zhu},
  \bibinfo{person}{Lei Ying}, \bibinfo{person}{Jian Tan}, {and}
  \bibinfo{person}{Li Zhang}.} \bibinfo{year}{2016}\natexlab{}.
\newblock \showarticletitle{{MapTask} Scheduling in {MapReduce} With Data
  Locality: Throughput and Heavy-Traffic Optimality}.
\newblock \bibinfo{journal}{{\em IEEE/ACM Trans. Netw.\/}}
  \bibinfo{volume}{24} (\bibinfo{date}{Feb.} \bibinfo{year}{2016}),
  \bibinfo{pages}{190--203}.
\newblock


\bibitem[\protect\citeauthoryear{Xiang, Lan, Aggarwal, and Chen}{Xiang
  et~al\mbox{.}}{2016}]%
        {XiaLanAgg_16}
\bibfield{author}{\bibinfo{person}{Yu Xiang}, \bibinfo{person}{Tian Lan},
  \bibinfo{person}{Vaneet Aggarwal}, {and} \bibinfo{person}{Yih-Farn~R. Chen}.}
  \bibinfo{year}{2016}\natexlab{}.
\newblock \showarticletitle{Joint Latency and Cost Optimization for
  Erasure-Coded Data Center Storage}.
\newblock \bibinfo{journal}{{\em IEEE/ACM Trans. Netw.\/}}
  \bibinfo{volume}{24}, \bibinfo{number}{4} (\bibinfo{date}{Aug.}
  \bibinfo{year}{2016}), \bibinfo{pages}{2443--2457}.
\newblock


\bibitem[\protect\citeauthoryear{Xie, Dong, Lu, and Srikant}{Xie
  et~al\mbox{.}}{2015}]%
        {XieDonLu_15}
\bibfield{author}{\bibinfo{person}{Qiaomin Xie}, \bibinfo{person}{Xiaobo Dong},
  \bibinfo{person}{Yi Lu}, {and} \bibinfo{person}{R. Srikant}.}
  \bibinfo{year}{2015}\natexlab{}.
\newblock \showarticletitle{Power of d Choices for Large-Scale Bin Packing: A
  Loss Model}. In \bibinfo{booktitle}{{\em Proc. ACM SIGMETRICS Int. Conf.
  Measurement and Modeling of Computer Systems}}. \bibinfo{address}{Portland,
  OR}, \bibinfo{pages}{321--334}.
\newblock


\bibitem[\protect\citeauthoryear{Xie and Lu}{Xie and Lu}{2015}]%
        {XieLu_15}
\bibfield{author}{\bibinfo{person}{Qiaomin Xie} {and} \bibinfo{person}{Yi Lu}.}
  \bibinfo{year}{2015}\natexlab{}.
\newblock \showarticletitle{Priority algorithm for near-data scheduling:
  Throughput and heavy-traffic optimality}. In \bibinfo{booktitle}{{\em Proc.
  IEEE Int. Conf. Computer Communications (INFOCOM)}}. \bibinfo{address}{Hong
  Kong, China}, \bibinfo{pages}{963--972}.
\newblock


\bibitem[\protect\citeauthoryear{Ying, Srikant, and Kang}{Ying
  et~al\mbox{.}}{2015}]%
        {YinSriKan_15}
\bibfield{author}{\bibinfo{person}{Lei Ying}, \bibinfo{person}{R. Srikant},
  {and} \bibinfo{person}{Xiaohan Kang}.} \bibinfo{year}{2015}\natexlab{}.
\newblock \showarticletitle{The power of slightly more than one sample in
  randomized load balancing}. In \bibinfo{booktitle}{{\em Proc. IEEE Int. Conf.
  Computer Communications (INFOCOM)}}. \bibinfo{address}{Kowloon, Hong Kong},
  \bibinfo{pages}{1131--1139}.
\newblock


\bibitem[\protect\citeauthoryear{Zheng, Shroff, and Sinha}{Zheng
  et~al\mbox{.}}{2013}]%
        {ZheShrSin_13}
\bibfield{author}{\bibinfo{person}{Yousi Zheng}, \bibinfo{person}{Ness~B.
  Shroff}, {and} \bibinfo{person}{Prasun Sinha}.}
  \bibinfo{year}{2013}\natexlab{}.
\newblock \showarticletitle{A new analytical technique for designing provably
  efficient {MapReduce} schedulers}. In \bibinfo{booktitle}{{\em Proc. IEEE
  Int. Conf. Computer Communications (INFOCOM)}}. \bibinfo{address}{Turin,
  Italy}, \bibinfo{pages}{1600--1608}.
\newblock


\end{thebibliography}
\citestyle{acmnumeric}

\appendix
\section{Non-independence Result for $k\supn=\Theta(n)$}\label{app:non-independence}
\begin{theorem}\label{thm:non-independence}
Consider an $n$-server system in the limited fork-join model with $k\supn=\Theta(n)$, job arrival rate $\Lambda\supn=n\lambda/k\supn$, and exponentially distributed service times with rate $\mu$. Let $\pi^{(n,2)}$ denote the joint distribution of the steady-state queue lengths for any two queues in the $n$-server system.  Let $\hat{\pi}^{(2)}$ denote the joint distribution of the steady-state queue lengths of two independent M/M/1 queues, each with load $\rho$.  Then there exists an $\epsilon>0$ and $n_0>0$, such that for any $n>n_0$, $d_{TV}\bigl(\pi^{(n,2)}, \hat{\pi}^{(2)}\bigr)>\epsilon$.
\end{theorem}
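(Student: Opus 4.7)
The plan is to reduce the problem to a two-dimensional Markov chain on two fixed queues and then to show, via one balance equation and the known M/M/1 marginals, that two of its stationary probabilities cannot simultaneously agree with the independent-product values. First I would observe that for any two fixed queues, say queues~$1$ and~$2$, the joint queue-length process $\bigl(Q_1(t),Q_2(t)\bigr)$ is autonomous. Jobs arrive as a Poisson process of rate $\Lambda\supn=n\lambda/k\supn$, and a given job selects both queues~$1$ and~$2$ with probability $k\supn(k\supn-1)/(n(n-1))$ and exactly one of them with probability $2k\supn(n-k\supn)/(n(n-1))$. Setting $c_n := (k\supn-1)/(n-1)$, this gives a single-queue arrival rate $\lambda(1-c_n)$ to each of queues~$1$ and~$2$, a joint-arrival rate $\lambda c_n$, and a service rate $\mu$ at each non-empty queue. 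These rates depend only on $(n,k\supn)$ and on $(Q_1,Q_2)$, so the remaining $n-2$ queues can be ignored. Marginally, each $Q_i$ is an M/M/1 queue with load $\rho$, so its stationary distribution is geometric with $\Pr(Q_i=q)=(1-\rho)\rho^q$.

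Next I would write the balance equation at the state $(0,0)$ of this two-dimensional chain; rate-out equals rate-in gives
\begin{equation*}
\lambda(2-c_n)\,\pi^{(n,2)}(0,0)=2\mu\,\pi^{(n,2)}(1,0),
\end{equation*}
so $\pi^{(n,2)}(1,0)=\tfrac{\rho(2-c_n)}{2}\,\pi^{(n,2)}(0,0)$. By contrast, the product distribution of two independent M/M/1 queues has $\hat{\pi}^{(2)}(0,0)=(1-\rho)^2$ and $\hat{\pi}^{(2)}(1,0)=\rho(1-\rho)^2$, satisfying $\hat{\pi}^{(2)}(1,0)=\rho\,\hat{\pi}^{(2)}(0,0)$. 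Matching these two relations forces $(2-c_n)/2=1$, i.e.\ $c_n=0$, which fails whenever $k\supn=\Theta(n)$.

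To convert this into a quantitative lower bound, set $\delta_0=\pi^{(n,2)}(0,0)-(1-\rho)^2$ and $\delta_1=\pi^{(n,2)}(1,0)-\rho(1-\rho)^2$; the balance equation rearranges to
\begin{equation*}
\delta_1=-\tfrac{\rho c_n(1-\rho)^2}{2}+\tfrac{\rho(2-c_n)}{2}\,\delta_0.
\end{equation*}
Since $k\supn=\Theta(n)$ gives $c_n\ge c_*>0$ for all large $n$, an elementary algebraic argument yields $\max\bigl(|\delta_0|,|\delta_1|\bigr)\ge\tfrac{\rho c_*(1-\rho)^2}{2+\rho(2-c_*)}$. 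Evaluating $d_{TV}$ at the singleton subsets $\{(0,0)\}$ and $\{(1,0)\}$ gives $d_{TV}\bigl(\pi^{(n,2)},\hat{\pi}^{(2)}\bigr)\ge\max\bigl(|\delta_0|,|\delta_1|\bigr)$, so $\epsilon$ can be chosen as the constant above. The main step is the Markovian reduction at the very beginning: once one sees that $(Q_1,Q_2)$ decouples from the other queues because the arrival mechanism depends only on the independent Poisson process and the $(n,k\supn)$-dependent uniform selection, the rest is a direct computation.
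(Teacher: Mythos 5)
Your proof is correct and takes a genuinely cleaner route than the paper's, while sharing its central structural observation. Both proofs exploit the fact that $(Q_1,Q_2)$ is autonomous (the arrival mechanism to queues $1$ and $2$ is a thinned Poisson process whose intensities depend only on $n$ and $k^{(n)}$), and both test the product form $\hat{\pi}^{(2)}$ against a balance equation of this two-dimensional chain. The paper, however, picks the balance equation at state $(1,1)$, which couples six stationary probabilities and forces a somewhat cumbersome bookkeeping of error terms $a_1,a_2$, a contradiction argument with the constant $\epsilon = \frac{p\lambda(1-\rho)^2}{2(11\Lambda+8\mu)}$, and the additional simplification $k^{(n)}=pn$ exactly. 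You instead pick the balance equation at $(0,0)$, which is the absorbing boundary for departures: the rate out is just the total arrival rate $\lambda(2-c_n)$ and the rate in comes only from $(1,0)$ and $(0,1)$. This involves only two distinct stationary probabilities (after symmetry), yields the one-line linear identity $\delta_1 = -\tfrac{\rho c_n(1-\rho)^2}{2} + \tfrac{\rho(2-c_n)}{2}\delta_0$, and gives a direct quantitative lower bound rather than a contradiction. It also handles $k^{(n)}=\Theta(n)$ in the natural way via $c_n\ge c_*>0$, with monotonicity of $c\mapsto \tfrac{c}{2+\rho(2-c)}$ justifying the final constant. The only thing left slightly implicit is that the steady-state law of the autonomous two-queue chain coincides with the marginal of $\pi^{(n)}$ on coordinates $1,2$ (standard, by irreducibility and uniqueness of the stationary law of the subchain), which you might spell out in a final write-up.
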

\begin{proof}
We assume that $k\supn = p n$ for a constant $p$ with $0<p\le 1$.  Then the job arrival rate is given by $\Lambda\supn = \lambda/p$, which is a constant.  So we rewrite $\Lambda\supn$ as $\Lambda$ for conciseness.

Let $\epsilon=\frac{p\lambda(1-\rho)^2}{2(11\Lambda+8\mu)}$.  We will specify $n_0$ later.  Suppose by contradiction that $d_{TV}\bigl(\pi^{(n,2)},\hat{\pi}^{(2)}\bigr)\le \epsilon$ for all $n>n_0$.  We will show that this assumption contradicts with the balance equations of the first two queues in the limited fork-join system with $n$ servers.

We first write out the balance equations for the Markov chain formed by the queue lengths of the first two queues. Consider a job arrival to this $n$-server system.  Let $p_0\supn$ be the probability that no task arrives to the first two queues, and $p_1\supn$ be the probability that exactly one task arrives to the first two queues. Let $p_2\supn=1-p_0\supn-p_1\supn$ be the probability that two tasks arrive to the first two queues.  We can compute these probabilities as follows:
\begin{align*}
p_0\supn &= \binom{n-2}{k}\Bigm / \binom{n}{k} \rightarrow p_0:=(1-p)^2 \quad \text{as }n\to \infty,\\
p_1\supn &= \frac{2\binom{n-2}{k-1}}{\binom{n}{k}} \rightarrow p_1:=2p(1-p) \quad \text{as }n \rightarrow \infty,\\
p_2\supn &= \frac{\binom{n-2}{k-2}}{\binom{n}{k}} \rightarrow p_2:=p^2 \quad \text{as }n \rightarrow \infty.
\end{align*}
Recall that the joint distribution of the steady-state queue lengths of the first two queues is $\pi^{(n,2)}$.  Then the balance equation of the first two queues for the state $(1,1)$ can be written as
\begin{align}
0&=\pi^{(n,2)}(1,1) \cdot (p_1\supn\Lambda+p_2\supn\Lambda+2\mu)\nonumber\\
&\mspace{23mu}-\biggl(\frac{1}{2}\pi^{(n,2)}(0,1)p_1\supn\Lambda+\frac{1}{2}\pi^{(n,2)}(1,0)p_1\supn\Lambda\label{eqn: balance}\\
&\mspace{57mu}+\pi^{(n,2)}(0,0)p_2\supn\Lambda+\pi^{(n,2)}(1,2)\mu+\pi^{(n,2)}(2,1)\mu\biggr).\nonumber
\end{align}

Let the right-hand-side of \eqref{eqn: balance} be denoted by $\mathcal{R}(\pi^{(n,2)})$. Let
\begin{align*}
a_1&=(p_1\supn-p_1)\Lambda\biggl(\pi^{(n,2)}(1,1)-\frac{1}{2}\pi^{(n,2)}(0,1)-\frac{1}{2}\pi^{(n,2)}(1,0)\biggr)\\
&\mspace{23mu}+(p_2\supn-p_2)\Lambda\biggl(\pi^{(n,2)}(1,1)-\frac{1}{2}\pi^{(n,2)}(0,1)\\
&\mspace{150mu}-\frac{1}{2}\pi^{(n,2)}(1,0)-\pi^{(n,2)}(0,0)\biggr),\\
a_2&=(\pi^{(n,2)}(1,1)-\hat{\pi}^{(2)}(1,1))(p_1\Lambda+p_2\Lambda+2\mu)\\
&\mspace{23mu}-\frac{1}{2}(\pi^{(n,2)}(0,1)-\hat{\pi}^{(2)}(0,1)+\pi^{(n,2)}(1,0)-\hat{\pi}^{(2)}(1,0))p_1\Lambda\\
&\mspace{23mu}-(\pi^{(n,2)}(0,0)-\hat{\pi}^{(2)}(0,0))p_2\Lambda\\
&\mspace{23mu}-(\pi^{(n,2)}(1,2)-\hat{\pi}^{(2)}(1,2)+\pi^{(n,2)}(2,1)-\hat{\pi}^{(2)}(2,1))\mu.
\end{align*}
Then since $\hat{\pi}^{(2)}(q_1,q_2)=(1-\rho)^2\rho^{q_1+q_2}$ for any $(q_1,q_2)\in\mathbb{Z}_+^2$,
\begin{align}
\mathcal{R}(\pi^{(n,2)})&=a_1+a_2+\hat{\pi}^{(2)}(1,1) \cdot (p_1\Lambda+p_2\Lambda+2\mu)\nonumber\\
&\mspace{23mu}-\biggl(\frac{1}{2}\hat{\pi}^{(2)}(0,1)p_1\Lambda+\frac{1}{2}\hat{\pi}^{(2)}(1,0)p_1\Lambda\nonumber\\
&\mspace{57mu}+\hat{\pi}^{(2)}(0,0)p_2\Lambda+\hat{\pi}^{(2)}(1,2)\mu+\hat{\pi}^{(2)}(2,1)\mu\biggr)\nonumber\\
&=a_1+a_2-p \lambda(1-\rho)^4.\label{eq:balance-ind}
\end{align}

We choose $n_0$ such that for any $n> n_0$, $|p_1\supn-p_1|\le \epsilon$ and $|p_2\supn-p_2|\le \epsilon$.  Then it is not hard to see that $|a_1|\le 3\Lambda\epsilon$.  By the assumption that $d_{TV}\bigl(\pi^{(n,2)},\hat{\pi}^{(2)}\bigr)\le \epsilon$, we have that $|a_2|\le8(\Lambda+\mu) \epsilon$. By the choice of $\epsilon$, $|a_1+a_2|\le (11\Lambda+8\mu)\epsilon=\frac{1}{2}p\lambda(1-\rho)^2$.  Therefore, $\mathcal{R}(\pi^{(n,2)})<0$ by \eqref{eq:balance-ind}, which contradicts with the balance equation \eqref{eqn: balance}.  This completes the proof of Theorem~\ref{thm:non-independence}.
\end{proof}

\section{Definition and Some Properties of Association}\label{app:association}
\begin{definition}[Association \cite{EsaProWal_67}]
We say random variables $X_1$, $X_2,\dots,X_m$ are associated if for all (entrywisely) nondecreasing functions $f$ and $g$,
\begin{multline}
\expect[f(X_1,X_2,\dots,X_m)g(X_1,X_2,\dots,X_m)]\\
\ge \expect[f(X_1,X_2,\dots,X_m)]\expect[g(X_1,X_2,\dots,X_m)].
\end{multline}
\end{definition}

\begin{lemma}[\cite{EsaProWal_67}]\label{lem:association}
Associated random variables have the following properties:
\begin{enumerate}[label=(P\arabic*),leftmargin=3em]
\item Nondecreasing functions of associated random variables are associated.
\item If two sets of associated random variables are independent of one another, then their union is a set of associated random variables.
\item If a sequence of random vectors $\bm{X}(u)\tod \bm{X}$ as $u\to\infty$ and for each $u$, the entries of $\bm{X}(u)$ are associated, then the entries of $\bm{X}$ are associated.
\end{enumerate}
\end{lemma}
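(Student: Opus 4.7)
My plan is to establish each of the three properties directly from the definition of association, which requires verifying $\mathbb{E}[fg]\ge \mathbb{E}[f]\mathbb{E}[g]$ for all (entrywise) nondecreasing test functions $f,g$ applied to the random vector in question. Each property reduces to a bookkeeping argument that produces new nondecreasing test functions from the hypothesized ones, with the only real measure-theoretic work appearing in (P3).

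For (P1), suppose $\bm{X}=(X_1,\dots,X_m)$ is associated and $Y_j=h_j(\bm{X})$ for nondecreasing $h_j$, $j=1,\dots,k$. Given any nondecreasing $f,g\colon \mathbb{R}^k\to\mathbb{R}$, define $F(\bm{x})=f(h_1(\bm{x}),\dots,h_k(\bm{x}))$ and $G(\bm{x})=g(h_1(\bm{x}),\dots,h_k(\bm{x}))$. Since compositions of entrywise nondecreasing maps remain nondecreasing, $F$ and $G$ are nondecreasing in $\bm{x}$, and applying the definition of association to $\bm{X}$ with test functions $F,G$ yields the required inequality. For (P2), let $\bm{X}$ and $\bm{Y}$ be independent and internally associated, and let $f,g$ be nondecreasing functions of $(\bm{X},\bm{Y})$. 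Conditioning on $\bm{X}=\bm{x}$, both $f(\bm{x},\cdot)$ and $g(\bm{x},\cdot)$ are nondecreasing in $\bm{y}$, so association of $\bm{Y}$ gives $\mathbb{E}[f(\bm{x},\bm{Y})g(\bm{x},\bm{Y})]\ge \tilde f(\bm{x})\tilde g(\bm{x})$, where $\tilde f(\bm{x}):=\mathbb{E}[f(\bm{x},\bm{Y})]$ and $\tilde g(\bm{x}):=\mathbb{E}[g(\bm{x},\bm{Y})]$. Since expectation preserves monotonicity in parameters, $\tilde f$ and $\tilde g$ are nondecreasing in $\bm{x}$; taking expectations over $\bm{X}$ (using independence to push the conditional expectation through) and applying association of $\bm{X}$ to $\tilde f,\tilde g$ then yields $\mathbb{E}[fg]\ge \mathbb{E}[\tilde f(\bm{X})]\mathbb{E}[\tilde g(\bm{X})]=\mathbb{E}[f]\mathbb{E}[g]$.

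For (P3), the natural first step is to restrict to bounded continuous nondecreasing test functions $f,g$. Then $fg$ is also bounded and continuous, so weak convergence $\bm{X}(u)\Rightarrow \bm{X}$ implies $\mathbb{E}[f(\bm{X}(u))g(\bm{X}(u))]\to \mathbb{E}[f(\bm{X})g(\bm{X})]$ and similarly for the marginal expectations. The association inequality for each $\bm{X}(u)$ then passes to the limit. To upgrade to arbitrary (possibly unbounded and discontinuous) nondecreasing test functions—as required by the definition—I would truncate $f$ and $g$ by $(-N)\vee f\wedge N$ (which preserves monotonicity), approximate the truncated versions by bounded continuous nondecreasing functions (e.g.\ by convolving with a smooth nonnegative kernel of compact support), pass to the limit along this approximation, and finally let $N\to\infty$ by monotone/dominated convergence. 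The main obstacle, and the step I would be most careful about, is this measure-theoretic extension in (P3): one has to verify that the approximating sequence can be chosen to remain nondecreasing and that the limiting expectations exist and retain the inequality, which hinges on showing, for a minimal integrability assumption on the test functions, that the truncation and smoothing do not destroy monotonicity and do not break the interchange of limits.
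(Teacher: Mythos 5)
The paper does not prove this lemma; it states (P1)--(P3) as facts and cites \cite{EsaProWal_67} for them, so there is no proof in the paper to compare against. Your reconstruction of (P1) and (P2) is correct and matches the standard arguments in that reference: composition preserves entrywise monotonicity for (P1), and the conditioning/Fubini argument (using independence to reduce the conditional expectation to an unconditional one, then applying association of each block in turn) is exactly the classical proof of (P2). For (P3), your leading idea---verify the covariance inequality for bounded continuous nondecreasing test functions, where weak convergence lets you pass the inequality to the limit---is the right one and is how the result is obtained. The place where you would run into trouble is the final truncation-and-mollification step to upgrade from bounded continuous to arbitrary nondecreasing $f,g$: mollification of a discontinuous nondecreasing function converges to the original only at continuity points, so if the law of the limit $\bm{X}$ places mass on the discontinuity set of $f$ or $g$ (which you cannot rule out), the approximating expectations need not converge to the target ones, and the limiting inequality may fail to transfer. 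The clean and standard route, also found in \cite{EsaProWal_67}, is to invoke the characterization that a random vector is associated if and only if the covariance inequality holds for all bounded continuous (equivalently, all binary) nondecreasing test functions; once you have the inequality for the limit over that class, no further approximation is needed, and the delicate step you flagged disappears.
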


\end{document}